\begin{document}


\title{ An enhanced pinwheel algorithm for the bamboo garden trimming problem}
\author{Federico Della Croce\footnote{DIGEP, Politecnico di Torino, Italy, \texttt{federico.dellacroce@polito.it}}
\hspace*{0.02cm} 
\footnote{CNR, IEIIT, Torino, Italy}
}

\date{}
\maketitle

\newtheorem{rem}{Remark}
\newtheorem{theorem}{Theorem}
\newtheorem{prop}{Proposition}
\newtheorem{coro}{Corollary}
\newtheorem{lemma}{Lemma}

\begin{abstract}
In the Bamboo Garden Trimming Problem (BGT), there is a garden  populated by $n$ bamboos $b_1, b_2, \cdots, b_n$ with daily growth rates $h(1)\geq h(2) \geq \cdots \geq h(n)$. We assume that the initial heights of bamboos are zero. A gardener is in charge of the bamboos  and trims them to height zero according to some schedule. The objective is to design a perpetual schedule of trimming so as to maintain the height of the bamboo garden as low as possible. 
We consider the so-called 
discrete BGT variant, where the gardener is allowed to trim only one bamboo at the end of each day.
For discrete BGT, the current state-of-the-art approximation algorithm 
exploits the relationship between BGT and the classical Pinwheel scheduling problem and
provides a solution that guarantees a $2$-approximation ratio.
We 
propose an alternative Pinwheel scheduling algorithm 
with approximation ratio converging to $\frac{12}{7}$ when $\sum_{j=1}^n h(j) > > h(1)$.
Also, we show that the approximation ratio of the proposed algorithm never exceeds $\frac{32000}{16947} \approx 1.888$. 
\end{abstract}

{\bf Keywords:}
Bamboo trimming, Pinwheel scheduling, Approximation.

\section{Introduction}
We consider the so-called Bamboo Garden Trimming (BGT) Problem
\cite{DDN19,GKLLMR17}.
A garden G is populated by n bamboos $b_1,..., b_n$ each with its own daily growth rate.
It is assumed that
the initial bamboos heights are zero. 
Every day all bamboos except the ones that are cut 
grow the related extra heights.
The goal is to design a perpetual
schedule of cuts to maintain the elevation of the bamboo garden, that is the maximum height ever reached by any of the bamboos, as low as
possible. The problem takes its origins
from perpetual testing of virtual machines in cloud systems \cite{AKG15}.
Let denote by $H^*$ the optimal solution value and, for any algorithm $A$, let denote by $H^A$ the corresponding solution value.
Approximation results are given both in \cite{DDN19} and \cite{GKLLMR17} with one bamboo cut per day.
In \cite{GKLLMR17}, it is shown that a simple strategy denoted $ReduceMax$ that imposes to cut always the bamboo reaching the maximum height in each day has approximation ratio 
$\frac{H^{ReduceMax}}{H^*} \leq \log_2 n$.
Then, the so-called pinwheel algorithm (see \cite{HMRTV89}), here denoted $PW$,
is shown to have approximation ratio $\frac{H^{PW}}{H^*} \leq 2$.
In \cite{DDN19} it is conjectured that
$\frac{H^{ReduceMax}}{H^*} \leq 2$ and it is shown experimentally that such limit is never exceeded.
This work is connected to several pinwheel problems such as the periodic Pinwheel problem \cite{HRTV92,LL97} and the Pinwheel scheduling problem \cite{RR97}.
As mentioned in \cite{DDN19}, the garden with
$n$ bamboos is an analogue of a system of $n$ machines which have to
be attended (e.g., serviced) with different frequencies. \\

The $BGT$ problem can be expressed as follows.
There are $n$ bamboos and for each bamboo $b_j$
the related growth is denoted by $h(j)$.
W.l.o.g. we assume $ h(1) =2^0=1 \geq ... \geq h(n)$.
In \cite{GKLLMR17}, it is shown that a lower bound $LB$ on the maximum height of a bamboo is $\sum_{j=1}^n h(j)$.
Actually, it is straightforward to slightly extend this result as 
$LB = \max \{2h(1);\sum_{j=1}^n h(j)\}$ except for the trivial case with $n=1$.

\bigskip
Here, we look for a periodic trimming of the bamboos. More specifically, we first recall the main features of the 
pinwheel algorithm where each bamboo is assigned to a partition so that, typically, bamboo $b_1$ is assigned to partition $P_1$, bamboos $b_2,...,b_j$ are assigned to partition $P_2$, bamboos $b_{j+1},...,b_k$ to partition $P_3$ and so on up to bamboo $b_n$. If the number of partitions that is reached 
is equal to $\alpha$, then bamboo $b_1$ is repeatedly cut every $\alpha$ partitions and its maximum height will be 
$\alpha \times h(1)$.
If each bamboo $b_j$ has $h(j) = \frac{1}{2^{k}}$ for different integer values of $k\geq 0$, then it is always possible to assign bamboos to all partitions in such a way that
the sum of heights of the bamboos in each partition is equal to $h(1)$ except possibly the last partition that has sum of heights $\leq h(1)$.

Consider the example in Table \ref{tab1} where we have $\sum_{j=1}^n h(j)=135/16 \approx 8.4$.
\begin{table*}
\scriptsize
\begin{center}
\hspace*{-2cm}
\begin{tabular}{||c|c|c|c|c|c|c|c|c|c|c|c|c|c|c|c|c||} \hline 
$b_j$ & $b_1$ & $b_2$ & $b_3$ & $b_4$ & $b_5$ & $b_6$ & $b_7$ & $b_8$ & $b_9$ & $b_{10}$ & $b_{11}$ & $b_{12}$ & $b_{13}$ & $b_{14}$ & $b_{15}$ & $b_{16}$ \\ \hline
$h(j)$  &   $1$ & $1$ & $1$ & $1$ & $\frac{1}{2}$ & $\frac{1}{2}$ & $\frac{1}{2}$ & $\frac{1}{2}$ & $\frac{1}{2}$ &  $\frac{1}{2}$ &  $\frac{1}{2}$ & $\frac{1}{4}$ &   $\frac{1}{4}$  &  $\frac{1}{4}$ &  $\frac{1}{8}$ & $\frac{1}{16}$ \\ \hline
$P_j$  &   
 \multicolumn{1}{|c|}{$P_1:[b_1]$} & 
 \multicolumn{1}{|c|}{$P_2:[b_2]$} &
 \multicolumn{1}{|c|}{$P_3:[b_3]$} &  
 \multicolumn{1}{|c|}{$P_4:[b_4]$} &  
 \multicolumn{2}{|c|}{$P_5:[b_5,b_6]$} &  
 \multicolumn{2}{|c|}{$P_6:[b_7,b_8]$} &  
 \multicolumn{2}{|c|}{$P_7:[b_9,b_{10}]$} &  
 \multicolumn{3}{|c|}{$P_8:[b_{11},b_{12},b_{13}]$} &  
 \multicolumn{3}{|c|}{$P_9:[b_{14},b_{15},b_{16}]$}  \\ \hline
$\sum_{i \in P_j} h(i)$ & 
 \multicolumn{1}{|c|}{$1$} & 
 \multicolumn{1}{|c|}{$1$} & 
 \multicolumn{1}{|c|}{$1$} &
 \multicolumn{1}{|c|}{$1$} &  
 \multicolumn{2}{|c|}{$1$} &  
 \multicolumn{2}{|c|}{$1$} &  
 \multicolumn{2}{|c|}{$1$} &  
 \multicolumn{3}{|c|}{$1$} &  
 \multicolumn{3}{|c|}{$\frac{7}{16}$}  \\ \hline
\end{tabular} 
\end{center} 
\caption{A $16$-bamboos instance with all growths $h(j) = 2^k$ for different integer values of $k\leq 0$}
\label{tab1}
\end{table*}
\normalsize
Row 1 lists the bamboos that are present,
row 2 indicates the height $h(j)$ of each bamboo $b_j$, row 3 indicates how bamboos are split into the partitions
and row $4$ indicates the bamboos heights in each partition. In this case, in all partitions, the sum of heights is equal to $1$ except the last one that has height $\frac{7}{16}$.
The bamboos trimming respects the order of the considered partitions so that partitions are repeated one after another
and every time a partition is considered, then  a bamboo assigned to that partition will be trimmed.
Given a partition $P_i$ and a bamboo $b_j$, the frequency of the bamboo $b_j$ to be trimmed in partition $P_i$ is not smaller than the ratio $r(j) = h(1)/h(j)$.
Indeed bamboo $b_j$ is trimmed at most once every $r(j)$ appearances of partition $P_i$.

\noindent
For the considered example, the $r(j$)s are as indicated in Table \ref{tab1a}
(notice that, if  $h(j) = \frac{1}{2^k}$ for some integer $k \geq 0$, then $r(j)=2^k$ always holds) while partitions and selected bamboos are iterated as indicated in Table \ref{tab1b} and so on (notice that for partition $P_9$, as $\sum_{i \in P_9} h(i) =\frac{7}{16}< 1$, bamboos 
$b_{14},b_{15},b_{16}$ can be trimmed more often than necessary).
Then, for the considered example, the maximum height reached (that is the solution cost) is not larger than $\alpha \times r(j) \times  h(j) = \alpha \times h(1) = 9*1=9$. 

\scriptsize
\begin{table*}
\begin{tabular}{||c|c|c|c|c|c|c|c|c|c|c|c|c|c|c|c|c||} \hline 
$b_j$ & $b_1$ & $b_2$ & $b_3$ & $b_4$ & $b_5$ & $b_6$ & $b_7$ & $b_8$ & $b_9$ & $b_{10}$ & $b_{11}$ & $b_{12}$ & $b_{13}$ & $b_{14}$ & $b_{15}$  & $b_{16}$\\ \hline
$r(j)$  &   1 & 1 & 1 & 1 & 2 & 2 & 2 & 2 &  2 &   2 &   2 &   4 &  4 &   4 &   8 & 16\\ \hline
\end{tabular} 
\caption{Relevant $r(j)$s for the example of Table \ref{tab1}}
\label{tab1a}
\end{table*}

\scriptsize
\begin{table*}
\begin{tabular}{||r||r||r||r||r||r||r||r||r||} \hline 
$P_1:[b_1]$  & $P_2:[b_2]$ &  $P_3:[b_3]$ & $P_4:[b_4]$ & $P_5:[b_{5}] $ &
$P_6:[b_7]$  & $P_7:[b_9]$ &  $P_8:[b_{11}]$ & $P_9:[b_{14}]$ \\ \hline
$P_1:[b_1]$  & $P_2:[b_2]$ &  $P_3:[b_3]$ & $P_4:[b_4]$ & $P_5:[b_{6}] $ &
$P_6:[b_8]$  & $P_7:[b_{10}]$ &  $P_8:[b_{12}]$ & $P_9:[b_{15}]$ \\ \hline
$P_1:[b_1]$  & $P_2:[b_2]$ &  $P_3:[b_3]$ & $P_4:[b_4]$ & $P_5:[b_{5}] $ &
$P_6:[b_7]$  & $P_7:[b_9]$ &  $P_8:[b_{11}]$ & $P_9:[b_{14}]$ \\ \hline
$P_1:[b_1]$  & $P_2:[b_2]$ &  $P_3:[b_3]$ & $P_4:[b_4]$ & $P_5:[b_{6}] $ &
$P_6:[b_8]$  & $P_7:[b_{10}]$ &  $P_8:[b_{13}]$ & $P_9:[b_{16}]$ \\ \hline
\end{tabular}
\caption{Relevant partitions for the example of Table \ref{tab1}}
\label{tab1b}
\end{table*}
\normalsize

\bigskip
\noindent
If bamboos growths are general, hence $h(j) \neq \frac{1}{2^k}$ holds, then $PW$ was used in \cite{GKLLMR17} by considering 
modified growths $h'(j)$ determined as follows: 
$h'(j) = \frac{1}{2^{k}}$ such that $ \frac{1}{2^{k+1}} < h(j) \leq \frac{1}{2^{k}}$ .
The rationale of the approach is to determine for every $h(j)$, the closest $ \frac{1}{2^k}$ value ($k$ integer $\geq 0$) not inferior to $h(j)$ and to set $h'(j)=\frac{1}{2^k}$.

To see how algorithm $PW$ works in this case, consider, the example in Table \ref{tab2} with growths $0 < h(j) \leq 1$, 
where the entries have the same meaning of Table \ref{tab1} and an additional row is included presenting the corresponding $h'(j)$
values. In this case we have $\sum_{i=1}^n h(j)= 5.93$. If we substitute each $h(j)$ with the corresponding $h'(j)$,
we can see that partitions and $r(j)$s are identical to those in Table \ref{tab1}
and we have $H^{PW} = \alpha \times h(1) = 9$ achieved by bamboo $b_1$ (and also other bamboos).

\begin{table*}
\scriptsize
\begin{center}
\hspace*{-2.3cm}
\begin{tabular}{||c|c|c|c|c|c|c|c|c|c|c|c|c|c|c|c|c||} \hline 
$b_j$ & $b_1$ & $b_2$ & $b_3$ & $b_4$ & $b_5$ & $b_6$ & $b_7$ & $b_8$ & $b_9$ & $b_{10}$ & $b_{11}$ & $b_{12}$ & $b_{13}$ & $b_{14}$ & $b_{15}$ & $b_{16}$ \\ \hline
$h(j)$  & $1$ &   $0.83$ & $0.6$  & $0.55$ & $0.45$ & $0.4$ & $0.32$ & $0.29$ &  $0.28$  &   $0.27$ &  $0.26$ &  $0.22$ &  $0.16$ & $0.15$  &   $0.1$ & $0.05$\\ \hline
$h'(j)$  &   $1$ & $1$ & $1$ & $1$ & $\frac{1}{2}$ & $\frac{1}{2}$ & $\frac{1}{2}$ & $\frac{1}{2}$ & $\frac{1}{2}$ &  $\frac{1}{2}$ &  $\frac{1}{2}$ & $\frac{1}{4}$ &   $\frac{1}{4}$  &  $\frac{1}{4}$ &  $\frac{1}{8}$ & $\frac{1}{16}$ \\ \hline
$P_j$  &   
 \multicolumn{1}{|c|}{$P_1:[b_1]$} & 
 \multicolumn{1}{|c|}{$P_2:[b_2]$} &
 \multicolumn{1}{|c|}{$P_3:[b_3]$} &  
 \multicolumn{1}{|c|}{$P_4:[b_4]$} &  
 \multicolumn{2}{|c|}{$P_5:[b_5,b_6]$} &  
 \multicolumn{2}{|c|}{$P_6:[b_7,b_8]$} &  
 \multicolumn{2}{|c|}{$P_7:[b_9,b_{10}]$} &  
 \multicolumn{3}{|c|}{$P_8:[b_{11},b_{12},b_{13}]$} &  
 \multicolumn{3}{|c|}{$P_9:[b_{14},b_{15},b_{16}]$}  \\ \hline
$\sum_{i \in P_j} h'(i)$ & 
 \multicolumn{1}{|c|}{$1$} & 
 \multicolumn{1}{|c|}{$1$} & 
 \multicolumn{1}{|c|}{$1$} &
 \multicolumn{1}{|c|}{$1$} &  
 \multicolumn{2}{|c|}{$1$} &  
 \multicolumn{2}{|c|}{$1$} &  
 \multicolumn{2}{|c|}{$1$} &  
 \multicolumn{3}{|c|}{$1$} &  
 \multicolumn{3}{|c|}{$\frac{7}{16}$}  \\ \hline
\end{tabular} 
\end{center} 
\caption{A $16$-bamboos instance with all growths $0 < h(j) \leq 1$}
\label{tab2}
\end{table*}
\normalsize

\bigskip

As mentioned before, it is shown in \cite{GKLLMR17} that $H^{PW} \leq 2 \sum h(j) \leq  2LB \leq 2H^*$ always holds, that is 
$\frac{H^{PW}}{H^*} \leq 2$. The following Lemma shows the asymptotical tightness of that result. 
\begin{lemma}
The bound on the approximation ratio $\rho_1 = \frac{H^{PW}}{H^*} \leq 2$ is asymptotically tight. 
\end{lemma}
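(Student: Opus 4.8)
The plan is to exhibit a one-parameter family of instances, indexed by an integer $m$, on which $\frac{H^{PW}}{H^*}$ tends to $2$. The guiding idea is that the rounding rule $h(j)\mapsto h'(j)$ is most wasteful exactly when a growth rate sits just above a power of $\tfrac12$, because then $h'(j)$ is almost twice $h(j)$. I would therefore take $h(1)=1$ together with $m-1$ bamboos of rate $h(j)=\tfrac12+\epsilon$ for $j=2,\dots,m$, where $\epsilon=\epsilon(m)>0$ is a small quantity to be fixed later; I take $m$ odd so that the rate-$(\tfrac12+\epsilon)$ bamboos pair up evenly.

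First I would compute $H^{PW}$. Since $\tfrac12<h(j)\le 1$ for every $j$, the rounding gives $h'(j)=1$ for all $j$, so each bamboo occupies its own partition, the number of partitions is $\alpha=m$, and hence $H^{PW}=\alpha\cdot h(1)=m$. This is the inflated side of the ratio: the partition count is driven up to $m$ although the true total growth is only about $\tfrac{m+1}{2}$.

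Next I would bound $H^*$ from above, which is the crux of the argument: the trivial inequality $H^*\ge LB$ goes the wrong way, so I instead must produce an explicit schedule. I would run the pinwheel schedule that is optimal for the idealised rates $1,\tfrac12,\dots,\tfrac12$, namely put $b_1$ in its own partition and pair the remaining $m-1$ bamboos into $\tfrac{m-1}{2}$ partitions, so that $\alpha=\tfrac{m+1}{2}$, with $b_1$ cut every $\alpha$ days and each rate-$(\tfrac12+\epsilon)$ bamboo cut every $2\alpha=m+1$ days. Applied to the actual instance this schedule reaches maximum height $\tfrac{m+1}{2}+(m+1)\epsilon$ (the height of $b_1$ is exactly $\tfrac{m+1}{2}$, while each other bamboo attains $(m+1)(\tfrac12+\epsilon)$), giving $H^*\le\tfrac{m+1}{2}+(m+1)\epsilon$.

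Finally I would choose $\epsilon$ so that the surplus $(m+1)\epsilon$ is negligible against $\tfrac{m+1}{2}$ as $m\to\infty$, for instance $\epsilon=m^{-2}$; then $H^*\le\tfrac{m+1}{2}+O(m^{-1})$, while $\sum_{j}h(j)=\tfrac{m+1}{2}+O(m^{-1})>2h(1)$ for $m\ge 4$, confirming we are in the nontrivial regime where $LB=\sum_j h(j)$. Combining the two estimates yields
\[
\frac{H^{PW}}{H^*}\ \ge\ \frac{m}{\tfrac{m+1}{2}+O(m^{-1})}\ =\ \frac{2m}{m+1}-O(m^{-2})\ \xrightarrow[m\to\infty]{}\ 2,
\]
which together with the already established upper bound $\frac{H^{PW}}{H^*}\le 2$ proves asymptotic tightness. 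The step I expect to be most delicate is the upper bound on $H^*$: one must display a concrete feasible (one-cut-per-day) schedule and verify its cost, and one must couple the two parameters correctly, letting $\epsilon\to0$ fast enough relative to $m$ (here $\epsilon=o(1/m)$ suffices) so that the extra growth accumulated by the rate-$(\tfrac12+\epsilon)$ bamboos does not spoil the limit.
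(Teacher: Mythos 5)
Your proposal is correct and follows essentially the same route as the paper: the same family of instances (one unit-rate bamboo plus an even number of bamboos with rate $\frac{1}{2}+\epsilon$, forcing $H^{PW}$ to equal the number of bamboos while an explicit paired schedule certifies $H^*\le\frac{m+1}{2}+(m+1)\epsilon$), and the same limiting ratio $\frac{2m}{m+1}\to 2$. Your only additions are explicit bookkeeping the paper leaves implicit, namely the coupling $\epsilon=o(1/m)$ and the check that $LB=\sum_j h(j)$ in this regime.
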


\begin{proof}
Consider an instance with
$n+1$ bamboos ($n$ being even), where $h(1)=1$, $h(2)= \cdots = h(n+1)=\frac{1}{2}+\epsilon$, with $\epsilon > 0$ and arbitrarily close to zero.
We have $H^{PW}=n+1$, while $LB=\max \{2h(1),\sum_{j=1}^{n+1}h(j)\}=\frac{n}{2}+1+n\epsilon$ and 
a periodic asymptotically optimal solution exists with $b_1$ assigned to partition $P_1$ and bamboos $b_{2j}, b_{2j+1}$ assigned to partition
$P_j$, $j=1,...n/2$ with value $H^*=\frac{n}{2}+1+(n+2)\epsilon$
which is arbitrarily close to $LB$ for $\epsilon$ small enough.
Also, this induces an approximation ratio asymptotically converging to 
$\frac{n+1}{\frac{n}{2}+1} \approx 2$
for $n \rightarrow \infty$.
\end{proof}
\section{Main result}
In order to compute a periodic solution providing an improved approximation ratio, we consider a different way for 
determining modified growths denoted here as $h''(j)$ and propose a different algorithm $PW''$ applied to these $h''(j)$ values.
For any given  $\frac{1}{2^{k+1}} < h(j) \leq \frac{1}{2^{k}}, k \geq 0$ integer, consider exhaustively splitting the bamboos into the following four subsets
$S_1$, $S_2$, $S_3$,  and $S_4$, compute the related $h''(j)$ values as follows and determine the relevant number of partitions $\pi_i$ ($i =1,\cdots, 4$) induced by these subsets.
\begin{description}
\item[$S_1$]: subset of bamboos $b_j: \frac{2}{3} < h(j) \leq 1$; let $h''(j)=h(1)=1$
and $\pi_1=\sum_{j \in S_1}h''(j)$.
\item[$S_2$]: subset of bamboos $b_j: \frac{1}{2} < h(j) \leq \frac{2}{3}$; for these bamboos, 
consider two options (a) and (b) and relevant modified growths $h''_a(j)$ and $h''_b(j)$: either set 
$h''_a(j)=1$ and $\pi_2 = |S_2|$, or set $h''_b(j)=\frac{1}{2}$ and $\pi_2=\lfloor \frac{|S_2|}{2}\rfloor$.
\item[$S_3$]: subset of bamboos $b_j: h(j) \leq  \frac{1}{2}$ with $\frac{2}{3}(\frac{1}{2^{k}}) < h(j)\leq\frac{1}{2^{k}}$, $k \geq 1$ integer; let $h''(j)=\frac{1}{2^{k}}$ and  $\pi_3=\lfloor \sum_{j \in S_3} h''(j)\rfloor$.
\item[$S_4$]: subset of bamboos $b_j: h(j) \leq  \frac{1}{2}$ with $\frac{1}{2^{k+1}} < h(j)\leq\frac{2}{3}(\frac{1}{2^{k}}) $, $k \geq 1$ integer; let $h''(j)=\frac{2}{3}(\frac{1}{2^{k}})$  and  $\pi_4=\lfloor \sum_{j \in S_4} h''(j)\rfloor$.
\end{description}

Let denote by $sh_i$ the sum of modified growths of subset $S_i$, that is $sh_i = \sum_{j \in S_i}h''(j)$. We remark that in $S_1$, the  $h''(j)$ values are equal to $1$ and thus integer.  Then, in $S_2$, either all  
$h''(j)$ values are equal to $1$ or are equal to $\frac{1}{2}$.
Also, in 
$S_3$ each $h''(j)$ value is multiple of $\frac{1}{2^{k}}$ for some integer $k \geq 1$ (where $k$ may differ from one bamboo to another) and 
in $S_4$   each $h''(j)$ value is multiple of  $\frac{2}{3}(\frac{1}{2^{k}})$ for some integer $k \geq 1$
(also here $k$ may differ from one bamboo to another). Correspondingly, the rationale is that, in subset $S_1$, $sh_1$ is an integer value and induces directly
an integer number of partitions $\pi_1=sh_1$ where to allocate the related bamboos
$b_j$ with unit modified grow $h''(j)=1$. 
Then, consider subset $S_2$. If option (a) is considered 
all $h''(j)$ values are equal to $1$ and thus integer
and $\pi_2 = sh_2 = \sum_{j \in S_2}h''(j)$.
Else option (b) holds. In this case all $h''(j)$ values are equal to $\frac{1}{2}$ and 
all bamboos (except possibly one if the number of bamboos is odd) are allocated in pairs determining $\pi_2=\lfloor \frac{|S_2|}{2}\rfloor$ partitions.
If the number of bamboos in $S_2$  is odd, then 
$sh_2 = \sum_{j \in S_2}h''(j)$ is fractional. Hence,
the last bamboo $b_l$ of this subset with value $h''(l)=\frac{1}{2}$ corresponding to the mantissa of $sh_2$ 
represents a remaining part  of subset $S_2$ to be assigned to some other partition.
Also, 
in subset $S_3$, as all  $h''(j)$ value are multiple of $\frac{1}{2^{k}}$ ($k \geq 1$), it is always possible to determine $\pi_3$
partitions where to allocate all bamboos of this subset except possibly a smaller subset of bamboos $R_{S_3}$
with  $\sum_{b_j \in R_{S_3}}h''(j) < 1$ corresponding to the mantissa of $sh_3$.
As an example, suppose in subset $S_3$ we have six bamboos $b_{j_1},\cdots,b_{j_6}$, with 
$h''(j_1)=h''(j_2)=h''(j_3)=\frac{1}{2}$, 
$h''(j_4)=h''(j_5)=\frac{1}{4}$ and
$h''(j_6)=\frac{1}{8}$ so that $sh_3=\frac{17}{8}$. Then, we have $\pi_3=\lfloor \frac{17}{8} \rfloor =2$ partitions, where to allocate 
bamboos $b_{j_1},\cdots,b_{j_5}$ with bamboos $b_{j_1},b_{j_2}$ in one partition and 
bamboos $b_{j_3},b_{j_4}$ and $b_{j_5}$ in the other partition, while  $b_{j_6} \in R_{S_3}$
represents a remaining part  of subset $S_3$ to be assigned to some other partition.
A similar consideration holds for subset $S_4$.
As all  $h''(j)$ value are multiple of $\frac{2}{3}(\frac{1}{2^{k}})$ ($k \geq 1$), it is always possible to determine $\pi_4$
partitions where to allocate all bamboos of this subset except possibly a smaller subset of bamboos $R_{S_4}$
with  $\sum_{b_j \in R_{S_4}}h''(j) < 1$ corresponding to the mantissa of $sh_4$.
As an example, suppose in subset $S_4$ we have nine bamboos $b_{k_1},\cdots,b_{k_9}$, with 
$h''(k_1)=h''(k_2)=h''(k_3)=h''(k_4)=h''(k_5)=\frac{1}{3}$, 
$h''(k_6)=h''(k_7)=h''(k_8)=\frac{1}{6}$ and 
$h''(k_9)=\frac{1}{12}$ so that $sh_4=\frac{27}{12}$. Then, we have $\pi_4=\lfloor \frac{27}{12} \rfloor =2$ partitions, where to allocate bamboos $b_{k_1},\cdots,b_{k_9}$ with bamboos $b_{k_1},b_{k_2},b_{k_3}$ in one partition and 
bamboos $b_{k_4},b_{k_5},b_{k_6}, b_{k_7}$ in the other partition, while  $b_{k_8},b_{k_9} \in R_{S_4}$ represent a remaining part  of subset $S_4$ to be assigned to some other partition.
We propose the following algorithm denoted $PW''$ that reads in input the number $n$ of bamboos $b_1, \cdots , b_n$ and relevant 
heights $h(1), \cdots , h(n)$, and provides in output the maximum height
attained by the algorithm corresponding to the minimum between
$z(a)$ and $z(b)$ computed as indicated in the algorithm.

\begin{algorithm}[H]
\begin{algorithmic}[PW'']
\STATE \textbf{Input:} BGT problem with $n$ bamboos $b_j$ and relevant growths $h(j)$ with $h(1)=1 \geq \cdots \geq h(n)$. 
\STATE Consider option (a) and compute subsets $S_i$ as indicated above, relevant $h''(j)$ values, $r(j)$ values and $\sum_{i=1}^4\pi_i(a)$; determine subsets $R_{S_i} (i=2,3,4)$ if non-empty and all bamboos $b_k$ $\in R_{S_2},R_{S_3},R_{S_4}$.
\STATE Determine the number of partitions
necessary to allocate all $b_k$ $\in R_{S_3},R_{S_4}$, namely compute $\pi_R (a)= \lceil \sum_{b_j \in 
(R_{S_3} \cup R_{S_4})}{h''(j)} \rceil$.
\STATE Compute the solution value that is $z(a)=\pi_R(a)+\sum_{i=1}^4\pi_i(a)$.
\STATE If $S_2$ is non empty, let denote by $b_{j^*}$ the 
bamboo $b_j$ with value $\frac{1}{2} < h(j) \leq \frac{2}{3}$
with largest value.
Repeat for option (b) the same steps above mentioned for option (a) and compute
 $\pi_R (b)= \lceil \sum_{b_j \in (R_{S_2} \cup R_{S_3} \cup R_{S_4})}{ h''(j) \rceil} $. Correspondingly, compute
 the solution value that is $z(b) = \frac{h(j^*)}{0.5}
(\pi_R(b)+\sum_{i=1}^4\pi_i(b))$, else $z(b)=+\infty$.
\STATE Return $\min\{z(a), z(b)\}$.
\end{algorithmic}
\caption{\textbf{$PW''$}}
\end{algorithm}

Table \ref{tab3} provides the relevant $h''(j)$ values (split into  $h''_a(j)$ and  $h''_b(j)$ for the two options of subset $S_2$), the relevant subsets $S_j, R_{S_j}  (j=1, \cdots,4)$ and the final partitions ($P_j(a)$ (option a)  and $P_j(b)$ (option b)) for the example of Table \ref{tab2} where $R_{S_2}$ and $R_{S_4}$ are empty.

\begin{table*}
\scriptsize
\begin{center}
\hspace*{-3.5cm}
\begin{tabular}{||c|c|c|c|c|c|c|c|c|c|c|c|c|c|c|c|c||} \hline 
$b_j$ & $b_1$ & $b_2$ & $b_3$ & $b_4$ & $b_5$ & $b_6$ & $b_{12}$ & $b_{15}$ & $b_{16}$ & $b_{7}$ & $b_{8}$ & $b_{9}$ & $b_{10}$ & $b_{11}$ & $b_{13}$ & $b_{14}$ \\ \hline
$h(j)$  & $1$ &   $0.83$ & $0.6$  & $0.55$ & $0.45$ & $0.4$ & $0.22$ & $0.1$ &  $0.05$  &   $0.32$ &  $0.29$ &  $0.28$ &  $0.27$ & $0.26$  &   $0.16$ & $0.15$\\ \hline
$h''(j)$ ($h''_a(j)$ for $S_5$)  &   $1$ & $1$ & $1$ & $1$ & $\frac{1}{2}$ & $\frac{1}{2}$ & $\frac{1}{4}$ & $\frac{1}{8}$ & $\frac{1}{16}$ &  $\frac{1}{3}$ &  $\frac{1}{3}$ & $\frac{1}{3}$ &   $\frac{1}{3}$  &  $\frac{1}{3}$ &  $\frac{1}{6}$ & $\frac{1}{6}$ \\ \hline
$h''(j)$ ($h''_b(j)$ for $S_5$)  &   $1$ & $1$ & $\frac{1}{2}$ & $\frac{1}{2}$ & $\frac{1}{2}$ & $\frac{1}{2}$ & $\frac{1}{4}$ & $\frac{1}{8}$ & $\frac{1}{16}$ &  $\frac{1}{3}$ &  $\frac{1}{3}$ & $\frac{1}{3}$ &   $\frac{1}{3}$  &  $\frac{1}{3}$ &  $\frac{1}{6}$ & $\frac{1}{6}$ \\ \hline
$S_j,R_{S_j}$  &   
 \multicolumn{2}{|c|}{$S_1:[b_1,b_2]$} & 
 \multicolumn{2}{|c|}{$S_2:[b_3,b_4]$} &
 \multicolumn{2}{|c|}{$S_3:[b_5,b_6]$} &  
 \multicolumn{3}{|c|}{$R_{S_3}:[b_{12},b_{15},b_{16}]$} &  
 \multicolumn{7}{|c|}{$S_4:[b_7,b_8,b_9,b_{10},b_{11},b_{13},b_{14}]$} \\ \hline
$P_j(a)$ &   
 \multicolumn{1}{|c|}{$P_1:[b_1]$} & 
 \multicolumn{1}{|c|}{$P_2:[b_2]$} &
 \multicolumn{1}{|c|}{$P_3:[b_3]$} &  
 \multicolumn{1}{|c|}{$P_4:[b_4]$} &  
 \multicolumn{2}{|c|}{$P_5:[b_5,b_6]$} &  
 \multicolumn{3}{|c|}{$P_6:[b_{12},b_{15},b_{16}]$} &  
 \multicolumn{3}{|c|}{$P_7:[b_7,b_8,b_{9}]$} &  
 \multicolumn{4}{|c|}{$P_8:[b_{10},b_{11},b_{13},b_{14}]$} 
\\ \hline
$\sum_{i \in P_j(a)} h''(i)$ & 
 \multicolumn{1}{|c|}{$[1]$} & 
 \multicolumn{1}{|c|}{$[1]$} & 
 \multicolumn{1}{|c|}{$[1]$} &
 \multicolumn{1}{|c|}{$[1]$} &  
 \multicolumn{2}{|c|}{$[1]$} &  
 \multicolumn{3}{|c|}{$[\frac{7}{16}]$} &  
 \multicolumn{3}{|c|}{$[1]$} &  
 \multicolumn{4}{|c|}{$[1]$}  \\ \hline
$P_j(b)$ &   
 \multicolumn{1}{|c|}{$P_1:[b_1]$} & 
 \multicolumn{1}{|c|}{$P_2:[b_2]$} &
 \multicolumn{2}{|c|}{$P_3:[b_3,b_4]$} &  
 \multicolumn{2}{|c|}{$P_4:[b_5,b_6]$} &  
 \multicolumn{3}{|c|}{$P_5:[b_{12},b_{15},b_{16}]$} &  
 \multicolumn{3}{|c|}{$P_6:[b_7,b_8,b_{9}]$} &  
 \multicolumn{4}{|c|}{$P_7:[b_{10},b_{11},b_{13},b_{14}]$} 
\\ \hline
$\sum_{i \in P_j(b)} h''(i)$ & 
 \multicolumn{1}{|c|}{$[1]$} & 
 \multicolumn{1}{|c|}{$[1]$} & 
 \multicolumn{2}{|c|}{$[1]$} &
 \multicolumn{2}{|c|}{$[1]$} &  
 \multicolumn{3}{|c|}{$[\frac{7}{16}]$} &  
 \multicolumn{3}{|c|}{$[1]$} &  
 \multicolumn{4}{|c|}{$[1]$}  \\ \hline
\end{tabular} 
\end{center} 
\caption{The modified growths $h''(j)$ and related entries for the $16$-bamboos instance with all growths $0 < h(j) \leq 1$}
\label{tab3}
\end{table*}
\normalsize

Thus, if option (a) is considered, 
partitions and selected bamboos are iterated as indicated in Table \ref{tab2b}
{\scriptsize
\begin{table*}
\begin{tabular}{||r||r||r||r||r||r||r||r||} \hline 
$P_1:[b_1]$  & $P_2:[b_2]$ &  $P_3:[b_3]$ & $P_4:[b_4]$ & $P_5:[b_{5}] $ &
$P_6:[b_{12}]$  & $P_7:[b_7]$ &  $P_8:[b_{10}]$ \\ \hline
$P_1:[b_1]$  & $P_2:[b_2]$ &  $P_3:[b_3]$ & $P_4:[b_4]$ & $P_5:[b_{6}] $ &
$P_6:[b_{15}]$  & $P_7:[b_8]$ &  $P_8:[b_{11}]$ \\ \hline
$P_1:[b_1]$  & $P_2:[b_2]$ &  $P_3:[b_3]$ & $P_4:[b_4]$ & $P_5:[b_{5}] $ &
$P_6:[b_{12}]$  & $P_7:[b_9]$ &  $P_8:[b_{13}]$ \\ \hline
$P_1:[b_1]$  & $P_2:[b_2]$ &  $P_3:[b_3]$ & $P_4:[b_4]$ & $P_5:[b_{6}] $ &
$P_6:[b_{16}]$  & $P_7:[b_7]$ &  $P_8:[b_{10}]$ \\ \hline
$P_1:[b_1]$  & $P_2:[b_2]$ &  $P_3:[b_3]$ & $P_4:[b_4]$ & $P_5:[b_{5}] $ &
$P_6:[b_{12}]$  & $P_7:[b_8]$ &  $P_8:[b_{11}]$ \\ \hline
$P_1:[b_1]$  & $P_2:[b_2]$ &  $P_3:[b_3]$ & $P_4:[b_4]$ & $P_5:[b_{6}] $ &
$P_6:[b_{15}]$  & $P_7:[b_9]$ &  $P_8:[b_{14}]$ \\ \hline
$P_1:[b_1]$  & $P_2:[b_2]$ &  $P_3:[b_3]$ & $P_4:[b_4]$ & $P_5:[b_{5}] $ &
$P_6:[b_{12}]$  & $P_7:[b_7]$ &  $P_8:[b_{10}]$ \\ \hline
$P_1:[b_1]$  & $P_2:[b_2]$ &  $P_3:[b_3]$ & $P_4:[b_4]$ & $P_5:[b_{6}] $ &
$P_6:[b_{16}]$  & $P_7:[b_8]$ &  $P_8:[b_{11}]$ \\ \hline
$P_1:[b_1]$  & $P_2:[b_2]$ &  $P_3:[b_3]$ & $P_4:[b_4]$ & $P_5:[b_{5}] $ &
$P_6:[b_{12}]$  & $P_7:[b_9]$ &  $P_8:[b_{13}]$ \\ \hline
$P_1:[b_1]$  & $P_2:[b_2]$ &  $P_3:[b_3]$ & $P_4:[b_4]$ & $P_5:[b_{6}] $ &
$P_6:[b_{15}]$  & $P_7:[b_7]$ &  $P_8:[b_{10}]$ \\ \hline
$P_1:[b_1]$  & $P_2:[b_2]$ &  $P_3:[b_3]$ & $P_4:[b_4]$ & $P_5:[b_{5}] $ &
$P_6:[b_{12}]$  & $P_7:[b_8]$ &  $P_8:[b_{11}]$ \\ \hline
$P_1:[b_1]$  & $P_2:[b_2]$ &  $P_3:[b_3]$ & $P_4:[b_4]$ & $P_5:[b_{6}] $ &
$P_6:[b_{16}]$  & $P_7:[b_9]$ &  $P_8:[b_{14}]$ \\ \hline
\end{tabular}
\label{tab2b}
\caption{Relevant partitions and bamboos cutting frequencies for the example of Table \ref{tab3}: option a}
\end{table*}}
\bigskip
\normalsize
\noindent
and so on (notice that for partition $P_6(a)$, as $\sum_{i \in P_6(a)} h''(i) =\frac{7}{16}< 1$, bamboos 
$b_{12},b_{15},b_{16}$ can be trimmed more often than necessary).

Alternatively, if option (b) is considered, 
partitions and selected bamboos are iterated as indicated in Table \ref{tab2c}
\bigskip
{\scriptsize
\begin{table*}
\begin{tabular}{||r||r||r||r||r||r||r||r||} \hline 
$P_1:[b_1]$  & $P_2:[b_2]$ &  $P_3:[b_3]$ & $P_4:[b_5]$ & $P_5:[b_{12}] $ &
$P_6:[b_{7}]$  & $P_7:[b_{10}]$  \\ \hline
$P_1:[b_1]$  & $P_2:[b_2]$ &  $P_3:[b_4]$ & $P_4:[b_6]$ & $P_5:[b_{15}] $ &
$P_6:[b_{8}]$  & $P_7:[b_{11}]$  \\ \hline
$P_1:[b_1]$  & $P_2:[b_2]$ &  $P_3:[b_3]$ & $P_4:[b_5]$ & $P_5:[b_{12}] $ &
$P_6:[b_{9}]$  & $P_7:[b_{13}]$  \\ \hline
$P_1:[b_1]$  & $P_2:[b_2]$ &  $P_3:[b_4]$ & $P_4:[b_6]$ & $P_5:[b_{16}] $ &
$P_6:[b_{7}]$  & $P_7:[b_{10}]$  \\ \hline
$P_1:[b_1]$  & $P_2:[b_2]$ &  $P_3:[b_3]$ & $P_4:[b_5]$ & $P_5:[b_{12}] $ &
$P_6:[b_{8}]$  & $P_7:[b_{11}]$  \\ \hline
$P_1:[b_1]$  & $P_2:[b_2]$ &  $P_3:[b_4]$ & $P_4:[b_6]$ & $P_5:[b_{15}] $ &
$P_6:[b_{9}]$  & $P_7:[b_{14}]$  \\ \hline
$P_1:[b_1]$  & $P_2:[b_2]$ &  $P_3:[b_3]$ & $P_4:[b_5]$ & $P_5:[b_{12}] $ &
$P_6:[b_{7}]$  & $P_7:[b_{10}]$  \\ \hline
$P_1:[b_1]$  & $P_2:[b_2]$ &  $P_3:[b_4]$ & $P_4:[b_6]$ & $P_5:[b_{16}] $ &
$P_6:[b_{8}]$  & $P_7:[b_{11}]$  \\ \hline
$P_1:[b_1]$  & $P_2:[b_2]$ &  $P_3:[b_3]$ & $P_4:[b_5]$ & $P_5:[b_{12}] $ &
$P_6:[b_{9}]$  & $P_7:[b_{13}]$  \\ \hline
$P_1:[b_1]$  & $P_2:[b_2]$ &  $P_3:[b_4]$ & $P_4:[b_6]$ & $P_5:[b_{15}] $ &
$P_6:[b_{7}]$  & $P_7:[b_{10}]$  \\ \hline
$P_1:[b_1]$  & $P_2:[b_2]$ &  $P_3:[b_3]$ & $P_4:[b_5]$ & $P_5:[b_{12}] $ &
$P_6:[b_{8}]$  & $P_7:[b_{11}]$  \\ \hline
$P_1:[b_1]$  & $P_2:[b_2]$ &  $P_3:[b_4]$ & $P_4:[b_6]$ & $P_5:[b_{16}] $ &
$P_6:[b_{9}]$  & $P_7:[b_{14}]$  \\ \hline
\end{tabular}
\label{tab2c}
\caption{Relevant partitions and bamboos cutting frequencies for the example of Table \ref{tab3}: option b}
\end{table*}}
\normalsize
\noindent
and so on.
Correspondingly, if option (a) is considered, we have $\alpha=8$ and the the solution value is $8h(1)=8$. Besides,
if option (b) is considered, we have $\alpha=7$ and the solution value is $7r(3)h(3)=8.4$. Hence $H^{PW''}=8$.

\begin{prop}
If $\sum_{j=1}^nh(j) > > h(1)$, then a periodic solution exists with approximation ratio $\rho_2 = \frac{H^{PW''}}{H^*}$ converging to $\frac{12}{7} \approx 1.714$ \label{t1}
\end{prop}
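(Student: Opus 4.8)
The plan is to bound the algorithm's output $H^{PW''}=\min\{z(a),z(b)\}$ from above by $\tfrac{12}{7}\sum_j h(j)$ up to an additive constant, and then divide by the lower bound $H^\ast\ge LB=\sum_j h(j)$, which is the operative bound once $\sum_j h(j)\gg h(1)$; the additive constant then washes out and the ratio tends to $\tfrac{12}{7}$.

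First I would write both objective values in closed form. Setting $P=sh_1+sh_3+sh_4$ and $m=|S_2|$, the number of partitions in each option satisfies $z(a)\le P+m+c$ and $z(b)\le 2\,h(j^\ast)\big(P+\tfrac m2\big)+c$ for a fixed constant $c$, where $c$ absorbs the floors in $\pi_3,\pi_4$ together with the ceilings defining $\pi_R$, i.e. the mass of the remainder sets $R_{S_2},R_{S_3},R_{S_4}$ (each below $1$). I would then record the per-subset comparison between modified and true growths that the construction guarantees: $h''(j)<\tfrac32 h(j)$ on $S_1\cup S_3$, $h''(j)<\tfrac43 h(j)$ on $S_4$, and $h(j)>\tfrac12$ on $S_2$, together with $h(j^\ast)\le\tfrac23$. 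Summing these yields the two facts I need: $\sum_j h(j)\ge \tfrac23 P+\tfrac12 m$, and $2h(j^\ast)\le\tfrac43$.

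The crux is a single convex-combination estimate that exploits keeping the better of the two options. Since $\min\{z(a),z(b)\}\le \lambda z(a)+(1-\lambda)z(b)$ for every $\lambda\in[0,1]$, I would take $\lambda=\tfrac47$. With $h(j^\ast)\le\tfrac23$ the coefficient of $P$ is $\lambda+2h(j^\ast)(1-\lambda)\le\tfrac87$ and that of $m$ is $\lambda+h(j^\ast)(1-\lambda)\le\tfrac67$, so $\min\{z(a),z(b)\}\le \tfrac87 P+\tfrac67 m+c'\le \tfrac{12}{7}\big(\tfrac23 P+\tfrac12 m\big)+c'\le \tfrac{12}{7}\sum_j h(j)+c'$. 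Dividing by $H^\ast\ge\sum_j h(j)$ and letting $\sum_j h(j)\to\infty$ gives $\rho_2\to\tfrac{12}{7}$. The degenerate case $S_2=\emptyset$ (where $z(b)=+\infty$) is separate and easier: there $H^{PW''}=z(a)\le P+c$ while $\sum_j h(j)\ge\tfrac23 P$, so the ratio stays below $\tfrac32<\tfrac{12}{7}$.

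I expect the optimization that forces the choice $\lambda=\tfrac47$ — equivalently, checking that no allocation of mass among $S_1,\dots,S_4$ and no value of $h(j^\ast)\le\tfrac23$ pushes the minimum past $\tfrac{12}{7}\sum_j h(j)$ — to be the conceptual heart, and the uniform control of the additive constants $c,c'$ (so they do not grow with $n$) to be the main technical nuisance. Finally, to justify the word ``converging'' I would exhibit a matching family that drives the bound to equality: roughly $N$ bamboos of $S_1$ with $h(j)\to\tfrac23^{+}$, together with an $S_2$ block consisting of one bamboo at $\tfrac23$ and $N-1$ bamboos at $\tfrac12^{+}$. For this family $z(a)=z(b)=2N$ while $\sum_j h(j)\to\tfrac76 N$, so that $H^{PW''}/LB\to\tfrac{12}{7}$, and a balanced round-robin schedule (in the spirit of the construction in Lemma~1) makes $H^\ast$ approach $\tfrac76 N$, confirming the ratio is attained in the limit.
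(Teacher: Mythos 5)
Your proof is correct, and it shares the paper's quantitative skeleton: discard $b_1$ and the remainder sets as additive constants, lower-bound the true mass subset by subset (coefficient $\tfrac{2}{3}$ on $S_1,S_3$, $\tfrac{3}{4}$ on $S_4$, $\tfrac{1}{2}$ on $S_2$), and upper-bound $z(a),z(b)$ by partition counts. Where you genuinely depart is in how $\min\{z(a),z(b)\}$ is handled. The paper treats the ratio as a function of the parameters, argues the worst case has $\pi_4=0$, sets $\alpha=\pi_1-1+\pi_3$, $\beta=\pi_2$, and locates the maximum of $\min\{\alpha+\beta,\tfrac{4}{3}(\alpha+\tfrac{\beta}{2})\}\big/(\tfrac{2}{3}\alpha+\tfrac{\beta}{2})$ at $\alpha=\beta$, obtaining $\tfrac{12}{7}$. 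You instead give a dual certificate: $\min\{z(a),z(b)\}\le\tfrac{4}{7}z(a)+\tfrac{3}{7}z(b)\le\tfrac{8}{7}P+\tfrac{6}{7}m+c'=\tfrac{12}{7}\bigl(\tfrac{2}{3}P+\tfrac{1}{2}m\bigr)+c'$, which dominates every parameter configuration at once; your $\lambda=\tfrac{4}{7}$ is exactly the multiplier that is tight at the paper's worst case, but your version needs no case analysis and no informal ``the worst case occurs when\dots'' claims, and it tracks the additive constants explicitly. You also go beyond the paper on the word ``converging'': the paper proves only the upper bound, while you supply a matching family. One repair is needed there: the thresholds defining $S_1,S_2$ presuppose the normalization $h(1)=1$, so your family must include the unit-growth bamboo $b_1$ (as in Lemma~1 of the paper); taken literally, your instance has maximum growth $\tfrac{2}{3}^{+}$, and rescaling it so the maximum equals $1$ would push the $\tfrac{1}{2}^{+}$ bamboos above $\tfrac{2}{3}$ and into $S_1$, collapsing the ratio to about $\tfrac{8}{7}$. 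With $b_1$ added the construction works: $z(a)=2N+1$, $z(b)\approx 2N$, $\sum_j h(j)\approx\tfrac{7}{6}N$, the cut densities sum to $1+o(1)$, and schedulability with two distinct periods gives $H^*\to\tfrac{7}{6}N$, so the ratio indeed tends to $\tfrac{12}{7}$.
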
 

\begin{proof}
Consider applying  the above pinwheel algorithm with modified growths $h''(j)$ for $j=1,...,n$.
As $\sum_{j=1}^nh(j) > > h(1)$, then we can disregard both in the numerator and in the denominator of $\rho_2$
the contribution given by $b_1$ plus the bamboos $\in R_{S_2}$, $R_{S_3}$  and $R_{S_4}$
(as it increases numerator and denominator by at most a constant value).
Hence, for the denominator, a lower bound is given by $\frac{2}{3}(\pi_1-1) + \frac{2}{3}\pi_3+\frac{3}{4}\pi_4+\frac{\pi_2}{2}$.
Besides, the numerator is given by 
$\pi_1-1+\pi_3+\pi_4 +\pi_2$ if case (a) is considered
and by $\frac{4}{3}(\pi_1-1+\pi_3+\pi_4 +\frac{\pi_2}{2})$ if case (b) is considered.
Correspondingly, an upper bound on the approximation ratio $\rho_2$ is given by

\begin{equation}
\rho_2 \leq \frac{\min\{\pi_1-1+\pi_3+\pi_4 +\pi_2,\frac{4}{3}(\pi_1-1+\pi_3+\pi_4 +\frac{\pi_2}{2})\}}{\frac{2}{3}(\pi_1-1) + \frac{2}{3}\pi_3+\frac{3}{4}\pi_4+\frac{\pi_2}{2}}
\end{equation}

that is 

\begin{equation}
\rho_2 \leq \frac{\min\{(\pi_1-1+\pi_3+\pi_4 +\pi_2),(\pi_1-1+\pi_3+\pi_4 +\pi_2)+\frac{1}{3}(\pi_1-1+\pi_3+\pi_4 -\pi_2)\}}{\frac{2}{3}(\pi_1-1) + \frac{2}{3}\pi_3+\frac{3}{4}\pi_4+\frac{\pi_2}{2}}
\end{equation}

We observe that the worst case occurs when $\pi_4=0$ as the coefficient in the denominator of $\pi_4$ is superior to that of $\pi_1-1$ or $\pi_3$.
But then, taking this observation into account and substituting $\pi_1-1+\pi_3$ with $\alpha$ and $\pi_2$ with $\beta$,
we get

\begin{equation}
\rho_2 \leq \frac{\min\{\alpha +\beta,\frac{4}{3}(\alpha +\frac{\beta}{2})\}}{\frac{2}{3}\alpha+\frac{\beta}{2}}
= \frac{\min\{\alpha +\beta,\alpha+\beta +\frac{1}{3}(\alpha -\beta)\}}{\frac{2}{3}\alpha+\frac{\beta}{2}}
\end{equation}

where $\rho_2$ is maximum for $\alpha=\beta$. Thus we get
\begin{equation}
\rho_2 \leq \frac{2\alpha}{\frac{2}{3}\alpha+\frac{\alpha}{2}}=\frac{2}{\frac{7}{6}} = \frac{12}{7}
\label{ww}
\end{equation}

 
\end{proof}

The above proposition handles a general case with $\sum_{j=1}^n h(j) > > h(1)$. If this is not the case, the following Propositions provide approximation ratios that exhaustively handle all possible other  distributions of the $h(j)$s.

The following exhaustive cases will be considered.
\begin{enumerate}
\item $|S_2| \leq \pi_1+\pi_3+\pi_4-1$ and $\pi_1+\pi_3+\pi_4-1 \geq 1$ (Proposition \ref{p2}).
\item $|S_2| > \pi_1+\pi_3+\pi_4-1$ and $|S_2| \geq 4$, $|S_2|$ even (Proposition \ref{p3}).
\item $|S_2| > \pi_1+\pi_3+\pi_4-1$ and $|S_2| \geq 3$, $|S_2|$ odd (Proposition \ref{p4}).
\item $|S_2| = 2$ and $\pi_1+\pi_3+\pi_4-1 = 1$  (Proposition \ref{p7}).
\item $|S_2| = 2$ and $\pi_1+\pi_3+\pi_4-1 =0$ (Proposition \ref{p9}).
\item $|S_2| =1$ and $\pi_1+\pi_3+\pi_4-1 =0$ (Proposition \ref{p10}).
\item $|S_2| =0$ and $\pi_1+\pi_3+\pi_4-1 =0$ (Proposition \ref{p11}).
\end{enumerate}

\begin{prop}
If $|S_2| \leq \pi_1+\pi_3+\pi_4-1$ and $\pi_1+\pi_3+\pi_4-1 \geq 1$ then
$\rho_2 = \frac{H^{PW''}}{H^*} \leq \frac{15}{8} = 1.875$.
\end{prop}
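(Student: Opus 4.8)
The plan is to bound $H^{PW''}$ using only option (a). Since $H^{PW''}=\min\{z(a),z(b)\}\le z(a)$, and since the hypothesis $|S_2|\le \pi_1+\pi_3+\pi_4-1$ is exactly the regime in which option (a) is the effective choice (consistent with the balance analysis in Proposition \ref{t1}, where option (a) wins as soon as $|S_2|<\pi_1+\pi_3+\pi_4$), I would write $z(a)=(\pi_1+\pi_3+\pi_4)+|S_2|+\pi_R(a)$ and set $\alpha=\pi_1+\pi_3+\pi_4-1$ and $\beta=|S_2|$. The hypothesis then reads $\beta\le\alpha$ with $\alpha\ge 1$, and the numerator of $\rho_2$ becomes $(\alpha+1)+\beta+\pi_R(a)$.

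Next I would lower bound the optimum. Using $H^*\ge\max\{2h(1),\sum_j h(j)\}$ together with the subset-wise inequalities $h(j)>\tfrac23 h''(j)$ on $S_1$ and $S_3$, $h(j)>\tfrac34 h''(j)$ on $S_4$, and $h(j)>\tfrac12$ on $S_2$, while keeping the exact contribution $h(1)=1$, I obtain
\[
\sum_j h(j)>1+\tfrac23(\pi_1-1)+\tfrac23\pi_3+\tfrac34\pi_4+\tfrac12\beta+\tfrac23 f,
\]
where $f=\sum_{b_j\in R_{S_3}\cup R_{S_4}}h''(j)$ is the total modified growth carried by the remainder and $\pi_R(a)=\lceil f\rceil$. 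Since $\tfrac34\ge\tfrac23$, this simplifies to the clean lower bound $1+\tfrac23\alpha+\tfrac12\beta+\tfrac23 f$, so that
\[
\rho_2\le\frac{(\alpha+1)+\beta+\lceil f\rceil}{1+\tfrac23\alpha+\tfrac12\beta+\tfrac23 f}.
\]

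Finally I would optimize the right-hand side. The numerator is piecewise constant in $f$ while the denominator is strictly increasing, so the supremum over the remainder is attained as $f\to 0^+$ (where $\pi_R(a)=1$); this is the genuinely wasteful configuration, a negligible remainder still costing one full partition. With $\pi_R(a)=1$ the ratio $\frac{\alpha+\beta+2}{1+\frac23\alpha+\frac12\beta}$ is increasing in $\beta$ (the relevant sign reduces to $\tfrac16\alpha>0$) and decreasing in $\alpha$, so the extremum over the feasible region $\{\beta\le\alpha,\ \alpha\ge 1\}$ is the balanced minimal case $\alpha=\beta=1$, giving $\frac{4}{13/6}=\frac{24}{13}\approx 1.846<\frac{15}{8}$; I would also check that whenever the implied value of $\sum_j h(j)$ would fall below $2$ the bound $2h(1)=2$ takes over and only lowers the ratio. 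The main obstacle is the bookkeeping of the discrete remainder term: one must pair the jump of $\lceil f\rceil$ against its guaranteed contribution $\tfrac23 f$ to the denominator and confirm that no configuration with $\pi_R(a)=2$, nor any with $\pi_4>0$, beats the $f\to 0^+$, $\pi_4=0$ extremum, both of which in fact enlarge the denominator faster than the numerator. Since retaining option (b) would only sharpen the constant further (towards the $\tfrac{12}{7}$ of Proposition \ref{t1}), the stated $\tfrac{15}{8}$ is a safe over-estimate.
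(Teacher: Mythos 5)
Your proof is correct and in fact establishes a constant, $\tfrac{24}{13}\approx 1.846$, strictly sharper than the paper's $\tfrac{15}{8}$, so the stated bound follows a fortiori. The skeleton is the same as the paper's: bound $H^{PW''}$ by $z(a)$, lower bound $H^*$ by $\sum_j h(j)$ with the per-subset constants $\tfrac23$, $\tfrac34$, $\tfrac12$ (your absorption of $\pi_4$ via $\tfrac34\pi_4\ge\tfrac23\pi_4$ is exactly equivalent to the paper's observation that the worst case has $\pi_4=0$), reduce to a ratio in $\alpha=\pi_1-1+\pi_3+\pi_4$ and $\beta=\pi_2$, and locate the extremum at $\alpha=\beta=1$ by the same monotonicity logic. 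The genuine divergence is the treatment of $b_1$ plus the remainder sets. The paper decouples them: it bounds $\frac{h(1)+\pi_R(a)}{h(1)+\sum_{b_j\in R_{S_3}\cup R_{S_4}}h(j)}\le 2$, takes the worst-case pair $\bigl(3,\tfrac32\bigr)$, and substitutes it into the global ratio, obtaining $\frac{3+2\alpha}{\frac32+\frac76\alpha}$ and hence $\tfrac{15}{8}$ at $\alpha=1$. You instead keep $\pi_R(a)=\lceil f\rceil$ coupled to the remainder's denominator contribution $\tfrac23 f$ and optimize jointly over $f$, so the worst configuration becomes $f\to 0^+$: $+1$ in the numerator against $+0$ in the denominator. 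Your dominance claim for the $\lceil f\rceil=2$ branch does check out: with $N=\alpha+\beta+1$ and $D=1+\tfrac23\alpha+\tfrac12\beta$, the inequality $\frac{N+1}{D}>\frac{N+2}{D+2/3}$ reduces to $\tfrac23(N+1)>D$, i.e.\ $\tfrac16\beta+\tfrac13>0$, which always holds; at $\alpha=\beta=1$ this gives $\frac{4}{13/6}=\frac{24}{13}$. The coupling is precisely where the paper loses tightness: its worst-case pair ($\pi_R(a)=2$ together with remainder height sum $\tfrac12$) is infeasible under the algorithm's own definition $\pi_R(a)=\lceil f\rceil$, since $\lceil f\rceil=2$ forces $f>1$ and hence a remainder height sum exceeding $\tfrac23$. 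What each approach buys: yours, a tighter constant for this case (immaterial to the paper's final Corollary, whose bottleneck is Proposition \ref{p3}, but a real refinement); the paper's, a shorter decoupled argument with no case analysis over $\lceil f\rceil$. Your two closing caveats (the $\max\{2h(1),\cdot\}$ check and the option-(b) remark) are not load-bearing, since the linear lower bound on the denominator already suffices throughout the feasible region.
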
 \label{p2}
\begin{proof}
If $|S_2| \leq \pi_1+\pi_3+\pi_4-1$, we consider $z(a)=\pi_R(a)+\sum_{i=1}^4\pi_i(a)$ with $|S_2|=\pi_2$
and the ratio $\frac{\pi_R(a)+\sum_{i=1}^4\pi_i(a)}{\sum_{j=1}^n h(j)}=\frac{h(1)+\pi_R(a)
+(\pi_1-1)+\pi_2+\pi_3+\pi_4}{h(1)+\frac{2}{3}(\pi_1-1) + \frac{2}{3}\pi_3+\frac{3}{4}\pi_4+\frac{\pi_2}{2}+\sum_{b_j \in R_{S_3} \cup R_{S_4}} h(j)}$. Notice that for the same argument of Proposition \ref{t1}, we observe that the worst case occurs when $\pi_4=0$.
Hence, $|S_2| \leq \pi_1+\pi_3+\pi_4-1$ corresponds to $\pi_2 \leq \pi_1-1+\pi_3$, that is 
$\beta \leq \alpha$ (with $\alpha \geq 1)$.
Also, $b_1$ is
allocated in one partition, 
all bamboos $\in R_{S_4}$ can be allocated within a second partition, and all bamboos $\in R_{S_3}$ in a third distinct partition. 
Hence, $h(1)+\pi_R(a) \leq 3$.
Besides, if $\sum_{b_j \in R_{S_3} \cup R_{S_4}}h(j) \leq \frac{1}{2}$,
then all bamboos $\in R_{S_3} \cup R_{S_4}$ could be allocated within a unique partition inducing $h(1)+\pi_R(a) \leq 2$. Thus, we have 
$\frac{h(1)+\pi_R(a)}{h(1)+\sum_{b_j \in R_{S_3} \cup R_{S_4}} h(j)}\leq 2$ with the 
worst-case given with $h(1)+\pi_R(a) = 3$ and $h(1)+\sum_{b_j \in R_{S_3} \cup R_{S_4}} h(j) =\frac{3}{2}$.
Hence, the ratio becomes 
$\frac{3+(\pi_1-1)+\pi_2+\pi_4}{\frac{3}{2}+\frac{2}{3}(\pi_1-1) + \frac{2}{3}\pi_3+\frac{\pi_2}{2}}$
that is $\frac{3+\alpha+\beta}{\frac{3}{2}+\frac{2}{3}\alpha +\frac{\beta}{2}}$ and the worst case
again is reached for $\alpha=\beta$, that is for the ratio $\frac{3+2\alpha}{\frac{3}{2}+\frac{2}{3}\alpha +\frac{\alpha}{2}} = \frac{3+2\alpha}{\frac{3}{2}+\frac{7}{6}\alpha} $. This ratio is maximum for $\alpha=1$
that is with $\rho_2 = \frac{3+2}{\frac{3}{2}+\frac{7}{6}} = \frac{15}{8}=1.875$.
\end{proof}

\begin{prop}
If $|S_2| > \pi_1+\pi_3+\pi_4-1$ and $|S_2| \geq 4$, $|S_2|$ even, then
$\rho_2 = \frac{H^{PW''}}{H^*} \leq \frac{32000}{16947} \approx 1.888$. \label{p3}
\end{prop}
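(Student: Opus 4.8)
The plan is to bound the algorithm's output $H^{PW''}=\min\{z(a),z(b)\}$ from above by its two candidate values and $H^*\ge\sum_{j=1}^n h(j)$ from below, and then maximize the resulting ratio over the admissible instances of this case. Since $|S_2|\ge 4$ is even, option (b) places the bamboos of $S_2$ in pairs into exactly $\pi_2(b)=|S_2|/2$ partitions and leaves $R_{S_2}=\emptyset$, so the only remainder partitions are those of $R_{S_3}$ and $R_{S_4}$. The worst bamboo under option (b) is $b_{j^*}$, the largest element of $S_2$, whence $z(b)=2\,h(j^*)\bigl(\pi_R(b)+\pi_1+\tfrac{|S_2|}{2}+\pi_3+\pi_4\bigr)$, while $z(a)=\pi_R(a)+\pi_1+|S_2|+\pi_3+\pi_4$. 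Because $|S_2|\ge 4$ we have $\sum_j h(j)>2=2h(1)$, so $\sum_j h(j)$ is the binding lower bound on $H^*$.

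For the denominator I would split $\sum_j h(j)$ across the four subsets using their defining inequalities: each bamboo of $S_1$ exceeds $\tfrac23$ (with $b_1$ equal to $1$), each of $S_3$ exceeds $\tfrac23 h''(j)$, each of $S_4$ exceeds $\tfrac34 h''(j)$, and each of $S_2$ exceeds $\tfrac12$, except the largest, which contributes its exact value $h(j^*)$. This gives the lower bound $1+\tfrac23(\pi_1-1)+\tfrac23\pi_3+\tfrac34\pi_4+\tfrac12(|S_2|-1)+h(j^*)+\sum_{R_{S_3}\cup R_{S_4}}h(j)$. Exactly as in Propositions \ref{t1} and \ref{p2}, the worst case occurs at $\pi_4=0$, since the coefficient $\tfrac34$ of $\pi_4$ in the denominator exceeds the coefficient $\tfrac23$ of $\pi_1-1$ and of $\pi_3$. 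I then collapse the data to $\alpha=\pi_1-1+\pi_3$, $\beta=|S_2|$ and $t=h(j^*)\in(\tfrac12,\tfrac23]$, subject to $\beta>\alpha$ and $\beta\ge 4$ even, absorbing the contribution of $b_1$ and of the remainders into a bounded number of extra partitions in the numerator (a constant $c$) matched by the corresponding height mass in the denominator (a constant $c'$), precisely as in the height-versus-partition estimate of Proposition \ref{p2}.

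The ratio then reduces to a finite optimization,
\[
\rho_2 \le \frac{\min\{\,c+\alpha+\beta,\ 2t\,(c+\alpha+\tfrac{\beta}{2})\,\}}{c'+\tfrac{2}{3}\alpha+\tfrac{1}{2}\beta+t},
\]
in the variables $\alpha,\beta,t$. The crux — and the main obstacle — is that $t=h(j^*)$ now appears on both sides: it inflates the option-(b) numerator through the factor $2t\le\tfrac43$, but it simultaneously inflates the denominator because $b_{j^*}$ adds $t$ rather than only $\tfrac12$ to $\sum_j h(j)$, so the two effects partly cancel and the extremal $t$ is not a priori at an endpoint. Taking the minimum with the option-(a) value caps the numerator and drives the worst case onto the boundary $z(a)=z(b)$; balancing there, and then optimizing over $\alpha$ and $\beta$ under the integrality and case constraints ($\beta$ even, $\beta\ge 4$, $\alpha$ integer, $\beta>\alpha$), selects the extremal configuration, whose substitution into the ratio is expected to give the claimed bound $\tfrac{32000}{16947}$.

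I expect the delicate part to be the bookkeeping of the constant partitions from $b_1$, $R_{S_3}$ and $R_{S_4}$: as in Proposition \ref{p2} one must verify that the worst case assigns these only a small integer number of extra partitions in the numerator while they contribute correspondingly little height to the denominator, and that this constant interacts correctly with the $t$-dependent balance above. Confirming that the maximizing triple $(\alpha,\beta,t)$ actually satisfies the hypotheses of this case — in particular $\beta>\alpha$ with $\beta$ even and $\ge 4$ — is what separates this analysis from the simpler situation of Proposition \ref{p2}.
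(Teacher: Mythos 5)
Your proposal is correct in substance but follows a genuinely different route from the paper. The paper's proof of Proposition \ref{p3} never invokes $z(a)$: it works with option (b) alone and tames the dependence on $h(j^*)$ by a threshold split, treating separately $h(j^*)\leq\frac{649}{1000}$ (where $\frac{h(j^*)}{0.5}\leq\frac{649}{500}$ and the $S_2$ mass in the denominator is bounded below by $\frac{\pi_2}{2}$) and $h(j^*)>\frac{649}{1000}$ (where $\frac{h(j^*)}{0.5}\leq\frac{4}{3}$ and the denominator is credited an extra $\frac{149}{1000}$); in both subcases the worst case is $\pi_2=4$, $(\pi_1-1)+\pi_3=3$, $\pi_4=0$, giving $\frac{2596}{1375}$ and $\frac{32000}{16947}$ respectively. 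Your plan instead keeps $t=h(j^*)$ exact on both sides and caps the numerator with $\min\{z(a),z(b)\}$; this is valid, and your balance-point argument is sound: for fixed $(\alpha,\beta)$ the ratio increases in $t$ while option (b) is the minimum and decreases once option (a) binds, so the maximum over $t$ sits at the balance point $t^*=\frac{3+\alpha+\beta}{2(3+\alpha+\beta/2)}$ — with the caveat, which you should state explicitly, that $t^*>\frac{2}{3}$ exactly when $\beta>\alpha+3$, in which case the maximum is at the endpoint $t=\frac{2}{3}$ with option (b) active. One correction to your expectations: your tighter, coupled formulation does not reproduce the paper's constant. Carrying out your finite optimization (with $c=3$ and denominator $1+\frac{2}{3}\alpha+\frac{\beta}{2}+t$), the interior-balance regime peaks at $\alpha=1$, $\beta=4$, $t=\frac{2}{3}$ with value $\frac{24}{13}\approx 1.846$, and the endpoint regime peaks at $\alpha=0$, $\beta=4$ with value $\frac{20}{11}\approx 1.818$, so your method yields $\rho_2\leq\frac{24}{13}$, not $\frac{32000}{16947}\approx 1.888$. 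This is not a defect — since $\frac{24}{13}<\frac{32000}{16947}$ the proposition follows a fortiori; the paper's constant is an artifact of its decoupled estimates (it uses $h(j^*)\leq\frac{2}{3}$ in the numerator while crediting the denominator only with $h(j^*)>\frac{649}{1000}$) and cannot arise as the optimum of your coupled ratio. In short, the paper buys mechanical simplicity (one option, two numeric subcases), while your route buys a strictly sharper constant at the cost of the three-variable balance analysis and its endpoint bookkeeping.
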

\begin{proof}
If $|S_2| > \pi_1+\pi_3+\pi_4-1$ and $|S_2| \geq 4$, $|S_2|$ even,
we consider 
$z(b) = \frac{h(j^*)}{0.5}
(\pi_R(b)+\sum_{i=1}^4\pi_i(b))$ with $|S_2|=\pi_2$ and the ratio
$\frac{\frac{h(j^*)}{0.5}
(\pi_R(b)+\sum_{i=1}^4\pi_i(b))}{\sum_{j=1}^n h(j)}\leq
\frac{h(1)+\pi_R(b)
+(\pi_1-1)+\pi_3+\pi_4+\frac{\pi_2}{2}}{h(1)+\sum_{b_j \in R_{S_3} \cup R_{S_4}} h(j)+\sum_{b_j \in {S_2}}h(j)+\frac{2}{3}(\pi_1-1)+\frac{2}{3}\pi_3+\frac{3}{4}\pi_4}$.
With a similar analysis to that of Proposition \ref{p2}, 
we get $h(1)+\pi_R(b) \leq 3$
and $\frac{h(1)+\pi_R(b)}{h(1)+\sum_{b_j \in R_{S_2} \cup R_{S_3} \cup R_{S_4}} h(j)}\leq 2$ with the 
worst-case given with $h(1)+\pi_R(b) = 3$ and $h(1)+\sum_{b_j \in R_{S_3} \cup R_{S_4}} h(j) =\frac{3}{2}$. Also,
we can deduct $\pi_4=0$. 

Hence, the approximation ratio becomes 
\begin{eqnarray}
\rho_2 \leq \frac{\frac{h(j^*)}{0.5}(3+(\pi_1-1)+\pi_3 + \frac{\pi_2}{2})}{\frac{3}{2}+\frac{2}{3}(\pi_1-1)  + \frac{2}{3}\pi_3+\sum_{b_j \in {S_2}}h(j)}
\label{eq1}
\end{eqnarray}
We consider two subcases with either $h(j^*) \leq \frac{647}{1000}$,
or $\frac{4}{3} \leq h(j^*) > \frac{647}{1000}$. In the first case,
$\sum_{b_j \in {S_2}}h(j) = \frac{\pi_2}{2}$ and 
expression \ref{eq1} becomes
\begin{eqnarray}
\rho_2 \leq \frac{\frac{649}{500}(3+(\pi_1-1)+\pi_3 + \frac{\pi_2}{2})}{\frac{3}{2}+\frac{2}{3}(\pi_1-1)  + \frac{2}{3}\pi_3+\frac{\pi_2}{2}}
\label{eq2}
\end{eqnarray}

\noindent  where the worst-case occurs with $\pi_2$ as small as possible and
$(\pi_1-1)+\pi_3$ as large as possible, that is with $\pi_2=4$ and $(\pi_1-1)+\pi_3=3$.
Correspondingly, we get $\rho_2 \leq \frac{\frac{649}{500}(3+3+2)}{\frac{3}{2}+\frac{2}{3}3+2}=
\frac{\frac{1298}{125}}{\frac{11}{2}}=\frac{2596}{1375} \approx 1.888$.

In the second case, the worst-case for the ratio $\frac{h(j^*)}{0.5}$
is $\frac{4}{3}$. Besides, as $h(j^*) > \frac{649}{1000}$, then
$\sum_{b_j \in {S_2}}h(j)  > \frac{\pi_2-1}{2} + \frac{649}{1000} =
\frac{\pi_2}{2} + \frac{149}{1000}$. Hence, expression \ref{eq1} becomes
\begin{eqnarray}
\rho_2 \leq \frac{\frac{4}{3}(3+(\pi_1-1)+\pi_3 + \frac{\pi_2}{2})}{\frac{3}{2}+\frac{2}{3}(\pi_1-1)  + \frac{2}{3}\pi_3+\frac{\pi_2}{2}+\frac{149}{1000}}.
\label{eq3}
\end{eqnarray}

\noindent
Again the worst-case occurs with $\pi_2$ as small as possible and
$(\pi_1-1)+\pi_3$ as large as possible, that is with $\pi_2=4$ and $(\pi_1-1)+\pi_3=3$.
Correspondingly, we get $\rho_2 \leq \frac{\frac{4}{3}(3+3+2)}{\frac{3}{2}+\frac{2}{3}3+2+\frac{149}{1000}}=
\frac{\frac{32}{3}}{\frac{5649}{1000}}=\frac{32000}{16947} \approx 1.888$.
\end{proof}

\begin{prop}
If $|S_2| > \pi_1+\pi_3+\pi_4-1$ and $|S_2| \geq 3$, $|S_2|$ odd, then
$\rho_2 = \frac{H^{PW''}}{H^*} \leq \frac{24}{13} \approx 1.846$. \label{p4}
\end{prop}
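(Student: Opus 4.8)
The plan is to follow the template of Proposition \ref{p3}: since the hypothesis $|S_2|>\pi_1+\pi_3+\pi_4-1$ again indicates that pairing the $S_2$ bamboos saves partitions, I would bound $H^{PW''}$ through $z(b)$, i.e. set $h''_b(j)=\tfrac12$ on $S_2$, and use $H^*\geq\sum_{j=1}^n h(j)$ so that $\rho_2\leq H^{PW''}/\sum_j h(j)\leq z(b)/\sum_j h(j)$. The one feature distinguishing the present case is that $\pi_2=|S_2|$ is \emph{odd}: pairing yields $\pi_2(b)=\lfloor\pi_2/2\rfloor=\tfrac{\pi_2-1}{2}$ full partitions and leaves a single residual bamboo $b_l$ with $h''(l)=\tfrac12$, which is placed in $R_{S_2}$ and hence enters $\pi_R(b)$. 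As in Propositions \ref{t1} and \ref{p2}, I would first reduce to $\pi_4=0$, because the coefficient $\tfrac34$ multiplying $\pi_4$ in the lower bound for $\sum_{j=1}^n h(j)$ exceeds the coefficient $\tfrac23$ multiplying $\pi_1-1$ and $\pi_3$, so moving weight off $S_4$ only lowers the denominator.

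Writing $\alpha=(\pi_1-1)+\pi_3$ and $p=\tfrac{\pi_2-1}{2}$, I have $z(b)=\tfrac{h(j^*)}{0.5}\bigl(\pi_R(b)+\alpha+1+p\bigr)$, while the denominator is lower bounded term by term: $h(1)=1$ from $b_1$, more than $\tfrac23\alpha$ from $S_1\setminus\{b_1\}$ together with the full part of $S_3$, more than $p+h(j^*)$ from $S_2$ (the $\pi_2-1$ non-maximal bamboos exceed $\tfrac12$ each and $b_{j^*}$ contributes $h(j^*)$), and more than $\tfrac23 m_3+\tfrac34 m_4$ from $R_{S_3}\cup R_{S_4}$, where $m_3,m_4<1$ are the mantissas of $sh_3,sh_4$. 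The decisive sub-case is $\pi_R(b)\leq2$ (equivalently $h(1)+\pi_R(b)\leq3$), which holds whenever $m_3+m_4\leq\tfrac32$; here $\pi_R(b)=2$ forces $m_3+m_4>\tfrac12$, hence $\tfrac23 m_3+\tfrac34 m_4>\tfrac13$. Taking the scaling factor at its extreme $\tfrac{h(j^*)}{0.5}=\tfrac43$ (i.e. $h(j^*)=\tfrac23$) and the remainder growth at its minimum $\tfrac13$, the ratio collapses to
\begin{equation}
\frac{z(b)}{\sum_{j=1}^n h(j)}\leq\frac{\frac43\,(\alpha+p+3)}{2+\frac23\alpha+p}=\frac{4(\alpha+p+3)}{6+2\alpha+3p}.
\end{equation}
This is increasing in $\alpha$, so by the case constraint $\pi_2>\alpha$, i.e. $\alpha\leq2p$, the maximum sits at $\alpha=2p$, where it equals $\tfrac{12(p+1)}{6+7p}$; this is decreasing in $p$, so the worst case is the smallest admissible value $p=1$ (that is $\pi_2=3$, $\alpha=2$), yielding $\tfrac{24}{13}$. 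The complementary sub-case $\pi_R(b)=1$ is governed by the estimate of Proposition \ref{t1} and stays below $\tfrac{12}{7}$.

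I expect the genuine obstacle to be the sub-case $\pi_R(b)=3$, which the extra half-unit from $b_l$ makes possible (precisely when $m_3+m_4>\tfrac32$) and in which $z(b)/\sum_j h(j)$ alone would exceed $\tfrac{24}{13}$; this is exactly where the present case departs from the even case, where $\pi_R(b)=\lceil m_3+m_4\rceil\leq2$ always holds. The way around it is to exploit that the algorithm returns $\min\{z(a),z(b)\}$: in this regime option (a) carries no half-unit surcharge, $\pi_R(a)=\lceil m_3+m_4\rceil=2$, so that $H^{PW''}\leq z(a)=\alpha+\pi_2+3$, while the same inequality $m_3+m_4>\tfrac32$ now contributes more than $\tfrac{25}{24}$ to the denominator. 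Comparing the two branches one finds $z(a)\leq z(b)$ whenever $\alpha\geq 2p-4$ (in particular for all $p\leq 2$), and a short computation then shows that in the whole regime $\pi_R(b)=3$ the quantity $\min\{z(a),z(b)\}/\sum_j h(j)$ stays below $\tfrac{12}{7}$, so the $z(a)$ branch dominates exactly where the $z(b)$ branch fails.

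Finally, as in Proposition \ref{p3}, the dependence on $h(j^*)$ must be addressed. Rather than splitting into the two sub-cases $h(j^*)\leq\tfrac{647}{1000}$ and $h(j^*)>\tfrac{647}{1000}$, I would use the refined bound $\sum_{b_j\in S_2}h(j)>p+h(j^*)$, under which the ratio is monotone increasing in $h(j^*)$ (its only $h(j^*)$-dependence in the denominator is the single term $h(j^*)$, while the numerator scales by $2h(j^*)$); hence $h(j^*)=\tfrac23$ is genuinely the extremal choice and fixing it costs nothing. Combining the $\pi_R(b)\leq2$ computation with the $\pi_R(b)=3$ argument then closes the bound $\rho_2\leq\tfrac{24}{13}$.
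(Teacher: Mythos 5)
Your proposal is correct and lands on the same extremal configuration as the paper ($|S_2|=3$, $(\pi_1-1)+\pi_3=2$, ratio $\tfrac{24}{13}$), but it gets there by a genuinely different route. The paper's proof of Proposition \ref{p4} stays entirely inside the $z(b)$ branch and splits on the true growth of the leftovers: if $\sum_{b_j\in R_{S_3}\cup R_{S_4}}h(j)\leq\tfrac14$, the odd bamboo $b_l$ is packed into a single partition with all of $R_{S_3}\cup R_{S_4}$ (so $h(1)+\pi_R(b)\leq 2$, ratio $\leq\tfrac{40}{23}$); otherwise $b_l$ is packed with $R_{S_4}$ and $R_{S_3}$ gets its own partition (so $h(1)+\pi_R(b)\leq 3$), with the denominator credited an extra $\tfrac12$, giving $\tfrac{24}{13}$. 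Your split is instead on $\pi_R(b)\in\{1,2,3\}$, i.e., on the mantissa sum $m_3+m_4$, and you carry $h(j^*)$ as a variable and use monotonicity where the paper uses the cruder bounds $h(l)>\tfrac12$ and $\tfrac{h(j^*)}{0.5}\leq\tfrac43$. What your route buys is precisely the regime $m_3+m_4>\tfrac32$: there the paper's packing of $b_l$ with $R_{S_4}$ need not be feasible (it requires $m_4\leq\tfrac12$), and its $+\tfrac12$ in the denominator is not implied by its subcase hypothesis $>\tfrac14$; you correctly observe that option (b) alone can exceed $\tfrac{24}{13}$ there and invoke the $\min$ with $z(a)$, something the paper's proof of this proposition never does. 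The one soft spot in your write-up is that the closing claim for $\pi_R(b)=3$ is asserted (``a short computation shows'') rather than executed; it does hold, and for the proposition you only need $\tfrac{24}{13}$, which follows in two lines from your own estimates: using $\tfrac23 m_3+\tfrac34 m_4\geq\tfrac23(m_3+m_4)>1$, for $\alpha\leq 2p-4$ the $z(b)$ bound is at most $\frac{4p}{(7/3)p}=\tfrac{12}{7}$, while for $\alpha\geq 2p-3$ the $z(a)$ bound is $\frac{\alpha+2p+4}{\frac52+\frac23\alpha+p}\leq\tfrac{24}{13}$, this last inequality being equivalent to $3\alpha\geq 2p-8$, which holds since $3\alpha\geq 6p-9\geq 2p-8$ for $p\geq1$. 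With that spelled out, your argument is complete and in fact covers distributions of the mantissas that the paper's packing-based case analysis does not reach.
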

\begin{proof}
If $|S_2|$ is odd, then $R_{S_2}$ is composed by one bamboo $b_l$ with $0.5 < h_l \leq \frac{2}{3}$ and $\pi_2=|S_2|-1$. 
We consider the following two exhaustive subcases.
\begin{enumerate}
\item $\sum_{b_j \in R_{S_3} \cup R_{S_4}} h(j) \leq \frac{1}{4}$.
If this subcase holds, we consider $z(b) = \frac{h(j^*)}{0.5}
(\pi_R(b)+\sum_{i=1}^4\pi_i(b))$
where $b_l$ can be assigned to a partition together with $\sum {b_j \in R_{S_3} \cup R_{S_4}}$ and $b(1)$ is assigned to another partition, hence 
\begin{equation}
\frac{h(1)+\pi_R(b)}{h(1)+h(l) +\sum_{b_j \in R_{S_3} \cup R_{S_4}} h(j)}\leq 2.
\end{equation}
As previously, we can deduct $\pi_4=0$ and $\frac{h(j^*)}{0.5} \leq \frac{4}{3}$.
Thus, 
\begin{equation}
H^{PW''} \leq  \frac{4}{3}(2+(\pi_1-1)+\pi_3 + \frac{\pi_2-1}{2}).
\end{equation}

Besides, 
\begin{equation}
H^* \geq \sum_{j=1}^n h(j) \geq h(1)+h(l)+\frac{\pi_{2}-1}{2}+\frac{2}{3}(\pi_1-1)
+\frac{2}{3}\pi_3 \geq  \newline
\frac{3}{2}+\frac{\pi_{2}-1}{2}+\frac{2}{3}(\pi_1-1)
+\frac{2}{3}\pi_3.
\end{equation}

Hence, we get 
\begin{eqnarray}
\rho_2 \leq \frac{\frac{4}{3}(2+(\pi_1-1)+\pi_3 + \frac{\pi_2-1}{2})}{\frac{3}{2}+\frac{2}{3}(\pi_1-1)+ \frac{2}{3}\pi_3+\frac{\pi_{2}-1}{2}}
\label{eq4}
\end{eqnarray}
 where the worst-case occurs with $\pi_2$ as small as possible and
$(\pi_1-1)+\pi_3$ as large as possible, that is with $\pi_2=3$ and $(\pi_1-1)+\pi_3=2$.
Correspondingly, we get 
\begin{eqnarray}
\rho_2 \leq \frac{\frac{4}{3}(2+2+1)}{\frac{3}{2}+\frac{2}{3}2+1}=\frac{40}{23} \approx 1.739.
\label{eq4a}
\end{eqnarray}

\item $\sum_{b_j \in R_{S_3} \cup R_{S_4}} h(j) > \frac{1}{4}$.
If this subcase holds, we consider $z(b) = \frac{h(j^*)}{0.5}
(\pi_R(b)+\sum_{i=1}^4\pi_i(b))$
where $b_l$ can be assigned to a partition together with $\sum {b_j \in  R_{S_4}}$, $b(1)$ is assigned to another partition
and $\sum {b_j \in R_{S_3}}$ to another further partition, hence 
$\frac{h(1)+\pi_R(b)}{h(1)+h(l) +\sum_{b_j \in R_{S_3} \cup R_{S_4}} h(j)}\leq 3$.
As previously, we can deduct $\pi_4=0$ and $\frac{h(j^*)}{0.5} \leq \frac{4}{3}$.
Thus, 
\begin{equation}
H^{PW''} \leq  \frac{4}{3}(3+(\pi_1-1)+\pi_3 + \frac{\pi_2-1}{2}).
\end{equation}

Besides, 
\begin{equation*}
H^* \geq \sum_{j=1}^n h(j) \geq h(1)+h(l)+\frac{\pi_{2}-1}{2}+\frac{2}{3}(\pi_1-1)+\frac{2}{3}\pi_3 + \sum_{b_j \in R_{S_3} \cup R_{S_4}} h(j) >
\end{equation*}
\begin{equation}
\hspace*{-4.7cm}
 > \frac{3}{2}+\frac{\pi_{2}-1}{2}+\frac{2}{3}(\pi_1-1)+\frac{2}{3}\pi_3+ \frac{1}{2}.
\end{equation}

Hence, we get 
\begin{eqnarray}
\rho_2 < \frac{\frac{4}{3}(3+(\pi_1-1)+\pi_3 + \frac{\pi_2-1}{2})}{\frac{3}{2}+\frac{2}{3}(\pi_1-1)+ \frac{2}{3}\pi_3+\frac{\pi_{2}-1}{2}+ \frac{1}{2}}
\label{eq4b}
\end{eqnarray}
 where the worst-case occurs always with $\pi_2=3$ and $(\pi_1-1)+\pi_3=2$.
Correspondingly, we get 
\begin{eqnarray}
\rho_2 \leq \frac{\frac{4}{3}(3+2 +1)}{\frac{3}{2}+\frac{2}{3}2+1+\frac{1}{2}}=\frac{24}{13} \approx 1.846.
\label{eq4c}
\end{eqnarray}
\end{enumerate}
\end{proof}

\begin{prop}
If $|S_2| = 2$ and $\pi_1+\pi_3+\pi_4-1 =1$ 
then
$\rho_2 = \frac{H^{PW''}}{H^*} \leq \frac{360}{193} \approx 1.865$. 
\label{p7}
\end{prop}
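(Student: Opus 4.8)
The plan is to mirror the structure of Propositions \ref{t1}--\ref{p3}, but to exploit that here $|S_2|=2$ is small, so that taking the minimum of the two options is essential. First I would record the case hypotheses in the reduced variables: since $\pi_1+\pi_3+\pi_4-1=1$ and, as in Proposition \ref{t1}, the worst case has $\pi_4=0$, I set $(\pi_1-1)+\pi_3=1$, while $|S_2|=2$ gives $\pi_2=2$ for option (a) and $\pi_2(b)=1$ for option (b) (here $R_{S_2}=\emptyset$ since $|S_2|$ is even). As in Proposition \ref{p2}, $b_1$ together with the remainder sets $R_{S_3},R_{S_4}$ occupies at most three partitions, so $h(1)+\pi_R\le 3$, and these same bamboos contribute at least $3/2$ to $\sum_j h(j)$.

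With these substitutions the two candidate values become $z(a)\le 6$ and $z(b)\le \frac{h(j^*)}{0.5}\,5=10\,h(j^*)$, while $H^*\ge \sum_j h(j)\ge \frac32+\frac23+\sum_{b_j\in S_2}h(j)$. The decisive point is that $H^{PW''}=\min\{z(a),z(b)\}$: option (b) is cheap exactly when $h(j^*)$ is close to $\tfrac12$, whereas option (a) caps the cost at $6$ when $h(j^*)$ is large. I would therefore split on the value of $h(j^*)$ at a threshold $\tau$ (around $0.55$). For $h(j^*)\le\tau$ I bound $H^{PW''}\le z(b)\le 10\tau$ against $H^*\ge \frac32+\frac23+1$; for $h(j^*)>\tau$ I bound $H^{PW''}\le z(a)=6$ and strengthen the denominator using that both $S_2$-bamboos exceed $\tfrac12$ and the largest exceeds $\tau$, so $\sum_{b_j\in S_2}h(j)>\tfrac12+\tau$. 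Choosing $\tau$ so that $\tfrac12+\tau=\tfrac{21}{20}$ makes the second subcase the binding one and yields $\rho_2\le \frac{6}{\frac32+\frac23+\frac{21}{20}}=\frac{6}{193/60}=\frac{360}{193}$, while the first subcase stays safely below this.

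The main obstacle is the coupling introduced by the minimum: both the numerator (through the multiplier $\frac{h(j^*)}{0.5}$ in option (b)) and the denominator (through $\sum_{b_j\in S_2}h(j)$) depend on $h(j^*)$, so I cannot optimize them independently, and I must verify that the threshold $\tau$ separating ``option (b) is better'' from ``option (a) is better'' is placed so that neither subcase exceeds $\frac{360}{193}$. A secondary technical point, as in Proposition \ref{p2}, is to confirm that $\pi_4=0$ and the bound $h(1)+\pi_R\le 3$ (with the matching lower bound $3/2$ on the denominator) remain the worst case in this small-count regime, where only $\pi_R\in\{0,1,2\}$ is possible; checking the few sub-configurations of $R_{S_3},R_{S_4}$ is routine but must be done to legitimize the displayed constant.
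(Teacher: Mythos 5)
Your proposal is correct and follows essentially the same route as the paper: the paper also splits on $h(j^*)$ at the threshold $\frac{11}{20}$ (your $\tau$, since $\frac12+\tau=\frac{21}{20}$ gives exactly $\tau=\frac{11}{20}$), uses option (b) with bound $\frac{11}{10}\cdot 5=\frac{11}{2}$ against $H^*\ge\frac32+\frac23+1=\frac{19}{6}$ in the first subcase (yielding $\frac{33}{19}$), and option (a) with numerator $6$ against the strengthened denominator $\frac32+\frac23+1+\frac{1}{20}=\frac{193}{60}$ in the second, giving the binding value $\frac{360}{193}$. The auxiliary facts you flag for verification ($\pi_4=0$ in the worst case, $h(1)+\pi_R\le 3$ with matching $\frac32$ contribution) are carried over from Propositions \ref{t1} and \ref{p2} in the paper exactly as you do.
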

\begin{proof}
If $|S_2| = 2$ and $\pi_1+\pi_3+\pi_4-1 =1$,
we  consider two subcases depending on whether 
$h(j^*) \leq \frac{11}{20}$ or $h(j^*) > \frac{11}{20}$.
If $h(j^*) \leq \frac{11}{20}$, we consider 
$z(b) = \frac{h(j^*)}{0.5}
(\pi_R(b)+\sum_{i=1}^4\pi_i(b))$
and the relevant approximation ratio becomes

\begin{eqnarray}
\rho_2 \leq \frac{\frac{11}{10}(3+(\pi_1-1)+\pi_3 + \frac{\pi_2}{2})}{\frac{3}{2}+\frac{2}{3}(\pi_1-1)  + \frac{2}{3}\pi_3+\frac{\pi_2}{2}}
= 
\frac{\frac{11}{10}(3+1+ 1)}{\frac{3}{2}+\frac{2}{3}+1} =\frac{\frac{11}{2}}{\frac{19}{6}}=\frac{33}{19} \approx 1.737
\label{eq7}
\end{eqnarray}

Alternatively, $h(j^*) > \frac{11}{20}$ and 
we consider $z(a)=\pi_R(a)+\sum_{i=1}^4\pi_i(a)$  with $|S_2|=\pi_2$
and the relevant approximation ratio becomes

\begin{eqnarray}
\rho_2\leq \frac{3+(\pi_1-1)+\pi_2+\pi_3}{\frac{3}{2}+\frac{2}{3}(\pi_1-1) + \frac{2}{3}\pi_3+\frac{\pi_2}{2}+ \frac{1}{20}}
=\frac{3+3}{\frac{3}{2}+\frac{2}{3}+1+\frac{1}{20}}=
\frac{6}{\frac{193}{60}}=\frac{360}{193} \approx 1.865.
\label{eq7b}
\end{eqnarray}
\end{proof}

\begin{prop}
If $|S_2| = 2$ and $\pi_1+\pi_3+\pi_4-1 =0$ 
then
$\rho_2 = \frac{H^{PW''}}{H^*} \leq \frac{100}{53} \approx 1.887$. \label{p9}
\end{prop}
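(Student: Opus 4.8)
The plan is to specialize the two-option template of Proposition~\ref{p7} to the degenerate configuration forced by the hypotheses. Since $b_1\in S_1$ forces $\pi_1\geq 1$ while $\pi_3,\pi_4\geq 0$, the equality $\pi_1+\pi_3+\pi_4-1=0$ leaves only $\pi_1=1$ and $\pi_3=\pi_4=0$; the instance therefore reduces to $b_1$ in its own partition, the two bamboos of $S_2$, and the bamboos of $R_{S_3}\cup R_{S_4}$ (all of $S_3\cup S_4$, each part carrying modified mass below $1$). First I would record the two candidate values under the remainder bound $h(1)+\pi_R\leq 3$ obtained as in Proposition~\ref{p2}: with $\pi_2=|S_2|=2$ one gets $z(a)=3+(\pi_1-1)+\pi_2+\pi_3=5$ and $z(b)=\frac{h(j^*)}{0.5}\bigl(3+(\pi_1-1)+\pi_3+\tfrac{\pi_2}{2}\bigr)=8\,h(j^*)$, and, exactly as before, the worst case requires $\pi_4=0$, which is automatic here.

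Next I would split on the larger $S_2$ height $h(j^*)$ about the threshold $\tfrac{13}{20}$, bounding $H^{PW''}$ by whichever of $z(a),z(b)$ is smaller in each range. For $h(j^*)>\tfrac{13}{20}$ I would charge the solution to option (a): writing $s=\sum_{j\in S_2}h(j)>h(j^*)+\tfrac12>\tfrac{23}{20}$, the denominator lower bound $\tfrac32+\tfrac23(\pi_1-1)+\tfrac23\pi_3+s$ collapses to $\tfrac32+s$, so that
\begin{equation*}
\rho_2\leq\frac{5}{\tfrac32+s}<\frac{5}{\tfrac32+\tfrac{23}{20}}=\frac{5}{\tfrac{53}{20}}=\frac{100}{53},
\end{equation*}
which is exactly the claimed value and the binding subcase.

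The delicate part is the complementary range $h(j^*)\leq\tfrac{13}{20}$, treated with option (b). Here the crude substitution $h(1)+\pi_R=3$, $h(1)+\sum_R h(j)=\tfrac32$ that suffices for option (a) is too wasteful and would overshoot $\tfrac{100}{53}$; the remedy is to keep the remainder count and the remainder mass consistent. Concretely, $\pi_R(b)=2$ can occur only when $\sum_{b_j\in R_{S_3}\cup R_{S_4}}h''(j)>1$, whence $\sum_{b_j\in R_{S_3}\cup R_{S_4}}h(j)>\tfrac23$ and the denominator gains an extra $\tfrac16$ over the naive $\tfrac32$, while $\pi_R(b)\leq 1$ costs the numerator a whole partition. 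Propagating this matched pair through $z(b)$ with $\frac{h(j^*)}{0.5}\leq\tfrac{13}{10}$ and $s>h(j^*)+\tfrac12$ keeps the ratio at most $\tfrac{24}{13}<\tfrac{100}{53}$ for every $h(j^*)\leq\tfrac{13}{20}$. I expect this remainder bookkeeping, rather than any single inequality, to be the main obstacle; once it is in place, taking $\min\{z(a),z(b)\}$ in each range certifies $\rho_2\leq\tfrac{100}{53}$.
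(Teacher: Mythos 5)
Your proof is correct, and it takes a genuinely different route from the paper's. Both arguments share the same skeleton (forcing $\pi_1=1$, $\pi_3=\pi_4=0$, using $H^*\geq\sum_j h(j)$ and the remainder bound $h(1)+\pi_R\leq 3$), and both locate the binding configuration in the same place: two large $S_2$ bamboos, option (a), five partitions, denominator $\tfrac{53}{20}$, which is exactly where $\tfrac{100}{53}$ arises. The difference is the case split and the tool used away from that configuration. The paper splits on the \emph{smaller} $S_2$ height $h(k)$ at $\tfrac{23}{40}$, with further thresholds on the remainder masses ($\tfrac38$ and $\tfrac13$), and in its first subcase bounds $H^{PW''}$ by an ad hoc four-partition schedule in which $b_k$ is co-scheduled with $R_{S_3}$, $R_{S_4}$ or $b_l$ (value $\tfrac{23}{5}$ against $H^*\geq\tfrac52$); that schedule is not literally one of the two options the algorithm evaluates. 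You instead split on the \emph{larger} height $h(j^*)$ at $\tfrac{13}{20}$ and, below the threshold, charge to $z(b)$; this forces the refinement the paper never needs, namely that $\pi_R(b)=2$ implies actual remainder mass $>\tfrac23$ (via $h''(j)<\tfrac32 h(j)$ on $S_3\cup S_4$), while $\pi_R(b)\leq 1$ deletes a partition from the numerator. I verified this bookkeeping: it yields $\tfrac{8x}{x+13/6}\leq\tfrac{24}{13}$ when $\pi_R(b)=2$ and $\tfrac{6x}{x+3/2}\leq\tfrac{78}{43}$ when $\pi_R(b)\leq 1$, for $x=h(j^*)\leq\tfrac{13}{20}$, both below $\tfrac{100}{53}$, so your intermediate constant $\tfrac{24}{13}$ is right; and your case-1 use of the $(3,\tfrac32)$ pairing from Proposition \ref{p2} is legitimate here because the residual terms satisfy $s\geq\tfrac{\pi_2}{2}$. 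What each approach buys: yours stays entirely within the quantities $z(a)$ and $z(b)$ that $PW''$ actually returns (so it genuinely bounds $\min\{z(a),z(b)\}$), needs only two cases, and gives stronger interior constants — indeed, optimizing your threshold slightly above $\tfrac{13}{20}$ would even push the overall bound a little below $\tfrac{100}{53}$; the paper's proof avoids the $h''$-versus-$h$ remainder refinement, but pays with an extra subcase and with packing arguments that deviate from the algorithm as stated.
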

\begin{proof}
If $|S_2| = 2$ and $\pi_1+\pi_3+\pi_4-1 =0$, let $b_k,b_l$ $\in S_2$
with $h(k) \leq h(l)$.
We consider three main subcases.
\begin{enumerate}
\item $h(k) \leq \frac{23}{40}$ and $(\sum_{b_j \in R_{S_3}} h(j) \leq \frac{3}{8}$ or $\sum_{b_j \in R_{S_4}} h(j \leq \frac{1}{3}$ or $h(l) \leq \frac{23}{40})$. If this subcase holds, then bamboo $b_k$ can be assigned to a partition together either with $\sum {b_j \in R_{S_3}}$
(if $\sum_{b_j \in R_{S_3}} h(j) \leq \frac{3}{8}$ holds) or
with $\sum {b_j \in R_{S_4}}$
(if $\sum_{b_j \in R_{S_4}} h(j) \leq \frac{1}{3}$ holds)
or with $b_l$ (if $h(l) \leq \frac{23}{40}$ holds) and the total number 
of partitions is $\leq 4$ with $H^{PW''} \leq \frac{23 \times 4}{20}=\frac{23}{5}$. On the other hand, we have 
$H^* \geq \frac{3}{2}+\frac{2}{2}=\frac{5}{2}$.
Correspondingly, we get 
$\rho_2 = \frac{H^{PW''}}{H^*} \leq \frac{\frac{23}{5}}{\frac{5}{2}}=\frac{46}{25} = 1.84$. 
\item 
$h(k) \leq \frac{23}{40}$ and $\sum_{b_j \in R_{S_3}} h(j) > \frac{3}{8}$ and $\sum_{b_j \in R_{S_4}} h(j) > \frac{1}{3}$ and
$\frac{2}{3} > h(l) > \frac{23}{40}$.
If this subcase holds, then we consider
the ratio $\rho_2\leq\frac{\pi_R(a)+\sum_{i=1}^4\pi_i(a)}{\sum_{j=1}^n h(j)}$ 
where $\pi_R(a)+\sum_{i=1}^4\pi_i(a) \leq 5$ as $5$ partitions are necessary at most to allocate $b(1)$, $b(k)$, $b(l)$, $\sum {b_j \in R_{S_3}}$ and $\sum {b_j \in R_{S_4}}$. Also, $\sum_{j=1}^n h(j) =
h(1)+h(k)+h(l)+\sum_{b_j \in R_{S_3}} h(j) + \sum_{b_j \in R_{S_4}} h(j) \geq 1+\frac{1}{2}+\frac{23}{40}+\frac{3}{8}+\frac{1}{3}$. 
Correspondingly, we get 
$\rho_2\leq\frac{5}{1+\frac{1}{2}+\frac{23}{40}+\frac{3}{8}+\frac{1}{3}}=
\frac{300}{167} \approx 1.796$.
\item $h(k) > \frac{23}{40}$.
If this subcase holds, then we consider
the ratio $\rho_2\leq\frac{\pi_R(a)+\sum_{i=1}^4\pi_i(a)}{\sum_{j=1}^n h(j)}$ 
where $\pi_R(a)+\sum_{i=1}^4\pi_i(a) \leq 5$. For the denominator,
we get $\sum_{j=1}^n h(j) =
h(1)+h(k)+h(l)+\sum_{b_j \in R_{S_3}} h(j) + \sum_{b_j \in R_{S_4}} h(j) \geq 1+\frac{23}{20}+\frac{1}{2}= \frac{53}{20}$ where we assume $\sum_{b_j \in R_{S_3} \cup R_{S_4}} h(j) \geq \frac{1}{2}$
or else four partitions only would be necessary for the numerator and a better approximation ratio would hold.
Correspondingly, we get 
$\rho_2\leq\frac{5}{\frac{53}{20}}=
\frac{100}{53} \approx 1.887$.
\end{enumerate}
\end{proof}
 
\begin{prop}
If $|S_2| = 1$ and $\pi_1+\pi_3+\pi_4-1 =0$
then
$\rho_2 = \frac{H^{PW''}}{H^*} \leq \frac{20}{11} \approx 1.818$. \label{p10}
\end{prop}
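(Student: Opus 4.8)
The plan is to turn the two hypotheses into a rigid combinatorial description of the instance, then write both candidate costs $z(a)$ and $z(b)$ explicitly and bound $\min\{z(a),z(b)\}$ against $\sum_j h(j)$. First I would unpack the hypotheses. Since $\pi_1=sh_1=|S_1|\ge 1$ (because $b_1\in S_1$), the condition $\pi_1+\pi_3+\pi_4-1=0$ forces $\pi_1=1$ and $\pi_3=\pi_4=0$; hence $b_1$ is the only bamboo of $S_1$, while $S_3=R_{S_3}$ and $S_4=R_{S_4}$, each carrying modified mass $sh_3<1$ and $sh_4<1$. The hypothesis $|S_2|=1$ gives a single bamboo $b_l=b_{j^*}$ with $\tfrac12<h(l)\le\tfrac23$.

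Next I would write down the two options. In option (a) the partitions contain $b_1$ alone, $b_l$ alone (since $h''_a(l)=1$), and the remainder $R_{S_3}\cup R_{S_4}$, whose total modified mass is $sh_3+sh_4<2$ and therefore needs at most two partitions; thus $z(a)\le 4$. In option (b) the single $S_2$-bamboo joins the remainder with weight $\tfrac12$ (it lands in $R_{S_2}$ because $|S_2|$ is odd, so $\pi_2(b)=0$), giving $\alpha_b=1+\pi_R(b)$ with $\pi_R(b)=\lceil \tfrac12+sh_3+sh_4\rceil$ and $z(b)=\tfrac{h(l)}{0.5}\,\alpha_b=2h(l)\,\alpha_b$. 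The produced height is $H^{PW''}=\min\{z(a),z(b)\}$.

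For the denominator I would use $H^*\ge\sum_j h(j)=1+h(l)+\sum_{R_{S_3}}h+\sum_{R_{S_4}}h$, together with the interval bounds $\sum_{R_{S_3}}h>\tfrac23 sh_3$ and $\sum_{R_{S_4}}h>\tfrac34 sh_4$ coming from the very definitions of $S_3$ and $S_4$, and also the trivial $H^*\ge 2h(1)=2$. Then I would split on a threshold for $h(l)$ and on whether the remainder fills one or two partitions. When $h(l)$ is below the threshold I select option (b), whose multiplier $2h(l)$ is close to $1$ and keeps the ratio small; when $h(l)$ is above the threshold and the remainder genuinely needs two partitions (so $z(a)=4$), I select option (a) and charge the now-larger $h(l)$ and the remainder masses to the denominator. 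Each subcase then reduces to maximising a linear-fractional expression in the partition counts, whose optimum sits at a corner of the subcase, and the binding configuration — a two-partition remainder with $h(l)$ near the crossover between the two options — produces the claimed bound $\tfrac{20}{11}$.

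The main obstacle is the bookkeeping at this crossover: one must choose the threshold on $h(l)$ and the mass thresholds on $R_{S_3},R_{S_4}$ so that the option selected in each regime provably beats $\tfrac{20}{11}$, while simultaneously lower-bounding the true heights of the remainder pieces. Since the $S_3$-granularity (powers of $\tfrac12$) and the $S_4$-granularity ($\tfrac23$ times powers of $\tfrac12$) are incompatible, the packing of the remainder into $\lceil\,\cdot\,\rceil$ partitions and the accompanying height lower bound have to be argued together rather than separately; making these several thresholds mutually consistent, so that every regime is covered, is where the real work lies.
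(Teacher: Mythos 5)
Your reduction of the hypotheses is correct and matches the paper's: $S_1=\{b_1\}$, $\pi_1=1$, $\pi_3=\pi_4=0$, hence $S_3=R_{S_3}$, $S_4=R_{S_4}$ with $sh_3<1$, $sh_4<1$, and $S_2=\{b_l\}$ with $\frac12<h(l)\le\frac23$; likewise $z(a)\le 4$, $z(b)=2h(l)\bigl(1+\pi_R(b)\bigr)$, and the lower bounds on $H^*$ (including the conversion factors $\frac23$ for $S_3$ and $\frac34$ for $S_4$) are exactly the right ingredients. The genuine gap is that the argument stops where the proof has to happen: you never fix the threshold on $h(l)$, never delimit the regimes of remainder mass, and never evaluate a single ratio. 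The sentence asserting that ``the binding configuration \ldots produces the claimed bound $\frac{20}{11}$'' is precisely the statement to be proved, and conceding that making the thresholds consistent ``is where the real work lies'' does not discharge it. Moreover the assertion is not even consistent with your own plan: executing it, the regime where the remainder's modified mass lies in $(1,\frac32]$ and option (b) is chosen gives a ratio below $\frac{6h(l)}{\frac53+h(l)}\le\frac{12}{7}$, the regime where it lies in $(\frac32,2)$ gives $\le\frac{4}{5/2}=\frac85$, and the one-partition regime gives $\le\frac32$; no crossover produces $\frac{20}{11}$.

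For comparison, the paper completes the case analysis with concrete thresholds on \emph{actual} masses: (i) if $\sum_{j\in S_3\cup S_4}h(j)\le\frac12$, three partitions suffice and $H^*\ge2$ gives $\frac32$; (ii) if $h(l)+\sum_{j\in S_3\cup S_4}h(j)\ge\frac65$, it uses four partitions ($b_1$, $b_l$, $S_3$, $S_4$ kept separate) against $H^*\ge\frac{11}{5}$ --- this crude option-(a) count, not any crossover, is the source of $\frac{20}{11}$; (iii) otherwise $\sum_j h(j)<\frac{11}{5}$, and the paper merges $b_l$ into one partition with whichever of $S_3$, $S_4$ is light enough, getting three partitions scaled by $2h(l)$ and ratio $\le\frac95$. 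Two things in that proof are missing from your outline: the merging idea in (iii), and the fact that (ii) deliberately keeps $S_3$ and $S_4$ apart, whereas your formula $\pi_R(b)=\lceil\frac12+sh_3+sh_4\rceil$ takes mixed-granularity packing on faith --- note that modified growths $\frac12,\frac13,\frac16$ sum to $1$, i.e.\ trimming periods $2,3,6$, yet cannot be scheduled inside a single partition, so the ceiling count alone is not a feasibility argument. Your skeleton could be completed (and would in fact yield a constant better than $\frac{20}{11}$), but as written it is a plan, not a proof.
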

\begin{proof}
If $|S_2| = 1$ and $\pi_1+\pi_3+\pi_4-1 =0$,
then subset $S_1$ is composed by bamboo $b_1$ only, $\sum_{j \in S_3}h(j) < \frac{1}{2}$, $\sum_{j \in S_4}h(j) < \frac{2}{3}$
and $\sum_{j \in S_3 \cup S_4}h(j) < \frac{7}{6}$.
Also, $S_2$ is composed by a single bamboo $b_l$ with grow $\frac{1}{2} < h(l) \leq \frac{2}{3}$. Then, if $\sum_{j \in S_3 \cup S_4}h(j) \leq \frac{1}{2}$, all bamboos $\in  S_3 \cup S_4$ can be allocated to a 
single partition and thus $H^{PW''}=3$ and 
$\rho_2 = \frac{H^{PW''}}{H^*} \leq \frac{3}{2}$ as $H^* \geq 2$ always holds.
Alternatively, $\sum_{j \in S_3 \cup S_4}h(j) > \frac{1}{2}$
and correspondingly $\sum_{j=1}^n h(j) > 2$. Then, if $h(l)+\sum_{j \in S_3 \cup S_4}h(j) \geq \frac{6}{5}$,
$H^*\geq  \sum_{j=1}^n h(j) \geq \frac{11}{5}$: by allocating $b_1$ to one partition,
$b_l$ to another partition, all bamboos $\in  S_3$ to a third partition
and all bamboos $\in  S_4$ to a fourth partition, we get
$H^{PW''}=4$ and 
$\rho_2 = \frac{H^{PW''}}{H^*} \leq \frac{4}{\frac{11}{5}} = \frac{20}{11}$. Finally, if $ 2 < \sum_{j=1}^n h(j) < \frac{11}{5}$,
where $h(l) > \frac{1}{2}$ and $\sum_{j \in S_3 \cup S_4}h(j) > \frac{1}{2}$, we have both $h(l) < \frac{3}{5}$, and $\sum_{j \in S_3 \cup S_4}h(j) < \frac{3}{5}$. But then,
either  $\sum_{j \in S_3} h(j) < \frac{3}{10}$ or
$\sum_{j \in S_4} h(j) < \frac{3}{10}$ necessarily holds.
In both cases, by allocating either all bamboos $\in  S_3 $
or all bamboos $S_4$ in the same partition of $b_l$, we 
need $3$ partitions only and, as $h(l) < \frac{3}{5}$, we have
$H^{PW''} \leq \frac{6}{5} 3 = \frac{18}{5}$.
As $H^* \geq  \sum_{j=1}^n h(j) > 2$, we get
$\rho_2 = \frac{H^{PW''}}{H^*} \leq \frac{\frac{18}{5}}{2}
= \frac{9}{5}$.
\end{proof}

\begin{prop}
If $|S_2| =0$ and $\pi_1+\pi_3+\pi_4-1 =0$
then
$\rho_2 = \frac{H^{PW''}}{H^*} \leq \frac{3}{2}=1.5$. \label{p11}
\end{prop}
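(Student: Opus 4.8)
The plan is to first unpack the two hypotheses to pin down the exact structure of the instance, and then read off the numerator and the denominator of $\rho_2$ almost directly, with no genuine optimization needed.

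First I would observe that $b_1$ always belongs to $S_1$, since $h(1)=1>\frac{2}{3}$, so $\pi_1=|S_1|\geq 1$; moreover $\pi_3=\lfloor sh_3\rfloor$ and $\pi_4=\lfloor sh_4\rfloor$ are nonnegative integers. Hence the condition $\pi_1+\pi_3+\pi_4-1=0$ forces $\pi_1=1$, $\pi_3=0$ and $\pi_4=0$. This means $S_1=\{b_1\}$, and, because $\pi_3=\lfloor sh_3\rfloor=0$ and $\pi_4=\lfloor sh_4\rfloor=0$, we obtain $sh_3<1$ and $sh_4<1$, so that $R_{S_3}=S_3$ and $R_{S_4}=S_4$: every bamboo of $S_3$ and of $S_4$ is a remainder and the only ``full'' partition is the one holding $b_1$.

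Next I would evaluate the output of $PW''$. Since $|S_2|=0$, option (b) is unavailable ($z(b)=+\infty$), so $H^{PW''}=z(a)=\pi_R(a)+\sum_{i=1}^4\pi_i(a)$. Here $\sum_{i=1}^4\pi_i(a)=\pi_1+\pi_2+\pi_3+\pi_4=1$. For $\pi_R(a)=\lceil sh_3+sh_4\rceil$ I would use $sh_3+sh_4<2$ to conclude $\pi_R(a)\leq 2$; concretely, placing all of $R_{S_3}$ (whose $h''$-sum is $<1$) in one partition and all of $R_{S_4}$ (whose $h''$-sum is $<1$) in a second partition shows that two partitions genuinely suffice to host every remainder bamboo. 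Therefore $H^{PW''}\leq 3$.

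Finally, combining this with the lower bound $H^*\geq LB\geq 2h(1)=2$ recalled in the Introduction yields $\rho_2=\frac{H^{PW''}}{H^*}\leq\frac{3}{2}$, as claimed. There is essentially no obstacle in this case: once the hypotheses are seen to collapse to $\pi_1=1$, $\pi_3=\pi_4=0$ and $S_2=\emptyset$, the instance reduces to $b_1$ together with low bamboos whose modified growths split into two groups each summing to less than $1$, so the numerator is bounded by $3$ and the denominator by $2$ with no computation. The only point needing a moment of care is verifying that two partitions really accommodate all remainder bamboos, which follows at once from $sh_3<1$ and $sh_4<1$.
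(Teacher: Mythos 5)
Your proof is correct and takes essentially the same route as the paper: the hypotheses force $S_1=\{b_1\}$, $sh_3<1$ and $sh_4<1$, so three partitions ($b_1$ alone, all of $S_3$ together, all of $S_4$ together) give $H^{PW''}\leq 3$, while $H^*\geq 2h(1)=2$ yields $\rho_2\leq\frac{3}{2}$. If anything, your version is slightly more careful: the paper additionally asserts bounds on the true growth sums ($\sum_{j\in S_3}h(j)<\frac{1}{2}$ and $\sum_{j\in S_4}h(j)<\frac{2}{3}$) that are neither needed nor immediate, whereas you argue directly from the modified growths, which is all the allocation requires.
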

\begin{proof}
If $|S_2| =0$ and $\pi_1+\pi_3+\pi_4-1 =0$,  
then subset $S_1$ is composed by bamboo $b_1$ only, $\sum_{j \in S_3}h(j) < \frac{1}{2}$ and $\sum_{j \in S_4}h(j) < \frac{2}{3}$.
Also, $S_2$ is empty and, correspondingly,
$\sum_{j=1}^n h(j) < \frac{13}{6}$ and
$\sum_{j=2}^n h(j) < \frac{7}{6}$ as $h(1)=1$. But then,
by allocating all bamboos $\in S_3$ in one partition and all bamboos $\in S_4$ in another partition, we get 
$H^{PW''} = 3$, Correspondingly, as $H^* \geq 2$,
$\rho_2 = \frac{H^{PW''}}{H^*} \leq \frac{3}{2}$.
\end{proof}

\begin{coro}
The approximation ratio of Algorithm $A_2$ is not superior to 
$\frac{32000}{16947} \approx 1.888$.
\end{coro}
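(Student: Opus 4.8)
The plan is to derive the corollary as the combination of the seven preceding propositions, which between them treat every admissible instance; the claimed bound is then just the largest of the seven individual bounds. To make this precise I would write $m = |S_2|$ and $p = \pi_1 + \pi_3 + \pi_4 - 1$, and first record the two elementary facts that $m \geq 0$ trivially and that $p \geq 0$: indeed $h(1)=1 > \tfrac{2}{3}$ places $b_1$ in $S_1$, so $\pi_1 = sh_1 \geq 1$, whence $p = \pi_1 + \pi_3 + \pi_4 - 1 \geq 0$. The crux is therefore to check that every pair $(m,p)$ with $m,p \geq 0$ meets the hypothesis of at least one of Propositions \ref{p2}, \ref{p3}, \ref{p4}, \ref{p7}, \ref{p9}, \ref{p10} and \ref{p11}.

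I would organise this verification by first splitting on $m \leq p$ versus $m > p$. When $m \leq p$, either $p \geq 1$, which is exactly the hypothesis of Proposition \ref{p2}, or $p = 0$, forcing $m = 0$ and landing in Proposition \ref{p11}. When $m > p$, the bound $p \geq 0$ gives $m \geq 1$, and I would enumerate according to the size and parity of $m$: if $m \geq 4$ is even we are in Proposition \ref{p3}; if $m \geq 3$ is odd we are in Proposition \ref{p4}; the value $m = 2$ forces $p \leq 1$, so $p \in \{0,1\}$, handled by Propositions \ref{p9} and \ref{p7} respectively; and $m = 1$ forces $p = 0$, handled by Proposition \ref{p10}. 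The residual possibility $m = 0$ with $m > p$ cannot occur since $p \geq 0$. This confirms that the seven hypotheses cover, and in fact partition, the space of instances.

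With exhaustiveness in hand the conclusion follows at once: on each piece of the partition the relevant proposition certifies an upper bound on $\rho_2 = \tfrac{H^{PW''}}{H^*}$, namely $\tfrac{15}{8}$, $\tfrac{32000}{16947}$, $\tfrac{24}{13}$, $\tfrac{360}{193}$, $\tfrac{100}{53}$, $\tfrac{20}{11}$ and $\tfrac{3}{2}$. Since every instance falls into one of these pieces, the approximation ratio of $PW''$ is at most the maximum of these seven rationals, and a direct comparison identifies that maximum as $\tfrac{32000}{16947} \approx 1.888$, attained in Proposition \ref{p3}.

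The only step requiring genuine care is the boundary of the exhaustiveness argument: one must confirm that the small-cardinality configurations $m \in \{0,1,2\}$ with small $p$ are each caught by exactly one proposition, and in particular that excluding $m = 2$ from Proposition \ref{p3} (whose hypothesis demands $m \geq 4$) is compensated by Propositions \ref{p7} and \ref{p9}. Beyond this, the result reduces to a finite comparison of explicit fractions, with no analytic difficulty.
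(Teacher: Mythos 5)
Your proposal is correct and follows essentially the same route as the paper: the paper's proof simply combines Propositions \ref{p2}, \ref{p3}, \ref{p4}, \ref{p7}, \ref{p9}, \ref{p10} and \ref{p11} and observes that the worst bound, $\frac{32000}{16947}$, comes from Proposition \ref{p3}. Your explicit verification that the seven hypotheses partition all pairs $(|S_2|,\ \pi_1+\pi_3+\pi_4-1)$ with $\pi_1 \geq 1$ is a welcome elaboration of what the paper only asserts via its case list, but it is the same argument, not a different one.
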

\begin{proof}
Putting together Propositions \ref{p2}, \ref{p3}, \ref{p4}, \ref{p7}, \ref{p9}, \ref{p10} and \ref{p11}, the Corollary immediately holds, the worst case occurring on Proposition \ref{p3}.
\end{proof}

\section{Conclusions}
An improved pinwheel algorithm has been proposed for the $BGT$ problem reaching an approximation ratio $\rho_2$ that in the worst-case converges to $\frac{12}{7}$ for instances with $\sum_{i=1}^nh(i) > > h(1)$. 
Also, it is shown that the worst case on the other instances
is reached when $\sum_{i=1}^nh(i) < 6 h(1)$ with $\rho_2 \leq \frac{32000}{16947} \approx 1.888$. 
This  approximation ratio is
substantially due to the presence of bamboos $b_j \in S_2$ with
heights $\frac{1}{2}h(1) < h(j) \leq \frac{2}{3}h(1)$ (inequality (\ref{ww}) induces $\rho_2 \leq \frac{3}{2}$ when $|S_2|=\beta=0$) or to specific subcases
where $\sum_{i=1}^nh(i)$ is relatively small ($\sum_{i=1}^nh(i)< 6h(1)$).
Hence, a research direction worthy of investigation would be to find a way to compute an improved lower bound with value strictly greater than $LB = \max \{2h(1);\sum_{j=1}^n h(j)\}$ for these subcases.

\section*{Acknowledgement}
This work was funded by the GEO-SAFE project and the EU Horizon2020 RISE programme,  grant agreement No 691161.

\end{document}